\pdfoutput=1
%
%
%
%

%

%
%

%

\documentclass{toc}

\tocdetails{%
volume=13, number=2, year=2017, firstpage=1,
specissue={\cccBG},  %
received={June 15, 2016},
revised={April 9, 2017},
published={May 15, 2017},
author={Rohit Gurjar, Arpita Korwar, and Nitin Saxena},
plaintextauthor={Rohit Gurjar, Arpita Korwar, Nitin Saxena},
title={Identity Testing for Constant-Width, and Any-Order, Read-Once Oblivious Arithmetic Branching Programs},
runningtitle={Identity Testing for Read-Once Oblivious Branching Programs},
acmclassification={F.2.2},
amsclassification={68Q25, 68W30},
keywords={polynomial identity testing, hitting sets, arithmetic branching programs},
doi={10.4086/toc.2017.v013a002}
}

\newcommand{\A}{\mathbb A}
\newcommand{\F}{\mathbb F}
\newcommand{\Wh}{\cbrace{0, 1, 2, \dots}}
\newcommand{\abs}[1]{\lvert #1 \rvert}

\newcommand{\ceil}[1]{\lceil #1 \rceil}

\newcommand{\transpose}{\mathsf T}
\renewcommand{\C}{\mathcal C}
\newcommand{\M}{M}
\newcommand{\degree}{d} %
\newcommand{\Hit}{\mathcal{H}} %

\newcommand{\lis}[3]{{#1}_1 #2 {#1}_2 #2\dots #2 {#1}_{#3}}
\newcommand{\cbrace}[1]{\left\{#1\right\}}

\newcommand{\comment}[1]{}

\DeclareMathOperator{\Span}{span}
\DeclareMathOperator{\paths}{paths}
\DeclareMathOperator{\supp}{supp} %
\DeclareMathOperator{\Supp}{Supp} %
\DeclareMathOperator{\coeff}{coef} %
\DeclareMathOperator{\rank}{rank}
\DeclareMathOperator{\cha}{char}

\DeclareMathOperator{\diag}{diag}

\newcommand{\x}{\boldsymbol x}

\newcommand{\doublesquare}[1]{[\![#1]\!]}
\renewcommand{\t}{\boldsymbol t}
\newcommand{\s}{\boldsymbol s}
\newcommand{\h}{\boldsymbol h}
\renewcommand{\a}{{\boldsymbol a}}  %
\renewcommand{\b}{{\boldsymbol b}}  %
\renewcommand{\c}{{\boldsymbol c}}  %
\newcommand{\xa}{\x^\a}

\newcommand{\xb}{\x^\b}
\newcommand{\xc}{\x^\c}

\newcommand{\w}{\mathrm{w}} %
\newcommand{\W}{\mathcal{W}} %

\newcommand{\poly}{\mathsf{poly}}
\newcommand{\quasipoly}{\mathsf{quasi\mbox{-}poly}}
\newcommand{\RL}{\cclass{RL}}
\renewcommand{\L}{\cclass{L}}

\newtheorem*{theorem*}{Theorem}

\begin{document}
\begin{frontmatter}

\title{Identity Testing for Constant-Width, and 
Any-Order, %
Read-Once Oblivious Arithmetic Branching Programs\titlefootnote{A \href{http://dx.doi.org/10.4230/LIPIcs.CCC.2016.29}{conference version} of this paper appeared in the 
 Proceedings of the 31st Computational Complexity Conference, 
 2016~\cite{GKS16}.}}

\author[gurjar]{Rohit Gurjar\thanks{Supported by DFG grant TH 472/4.}}
\author[korwar]{Arpita Korwar}
\author[saxena]{Nitin Saxena\thanks{Supported by DST-SERB.}}

\begin{abstract}
We give improved hitting sets for two special cases of Read-once Oblivious Arithmetic Branching Programs (ROABP).
First is the case of an ROABP with known 
order of the variables.
The best previously known hitting set for this case had size $(nw)^{O(\log n)}$ %
where $n$ is the number of variables %
and $w$ is the width of the ROABP\@.
Even for a constant-width ROABP, nothing better than a quasi-polynomial bound
was known.
We improve the hitting-set size for the known-order case
 to $n^{O(\log w)}$.
In particular, this gives the first 
polynomial-size 
hitting set for
constant-width ROABP (known-order).
However, our hitting set only works when the characteristic of the field
is zero or large enough.
To construct the hitting set, we use the concept of the rank of the partial derivative matrix.
Unlike previous approaches
which build up from mapping variables to monomials,
we map variables to polynomials directly.

The second case we consider is that of 
polynomials computable by width-$w$ ROABPs in any 
order of the variables.
The best
previously
known hitting set for this case had size
$d^{O(\log w)}(nw)^{O(\log \log w)}$, where $d$ is the individual degree. %
We improve the hitting-set size to 
$(ndw)^{O(\log \log w)}$.
\end{abstract}

%

\end{frontmatter}

\section{Introduction}
The polynomial identity testing (PIT) problem asks if a given multivariate polynomial
is identically zero. 
The input to the problem is given via an arithmetic model computing a polynomial,
for example, an arithmetic circuit, %
which is the arithmetic analogue of a Boolean circuit. %
The degree of the given polynomial is assumed to be polynomially bounded in the circuit size.
Typically, any such circuit %
can compute a polynomial with exponentially many monomials
(exponential in the circuit size).
Thus, one cannot hope to write down the polynomial in a sum-of-monomials form.
However, given such an input, it is possible to efficiently evaluate the polynomial at a point in the field.
This property enables a randomized polynomial identity test with one-sided error. 
It is known that evaluating a small-degree nonzero polynomial over a random point gives a nonzero
value with a good probability~\cite{DL78,Sch80,Zip79}. 
This gives us a randomized PIT---just evaluate the input polynomial, given as an arithmetic circuit, %
at random points.

Finding an efficient deterministic algorithm for PIT has been a major open question in complexity theory.
The question is also related to arithmetic circuit lower bounds~\cite{Agr05, HS80,KI03}. 
The PIT problem has been studied in two paradigms: (i) blackbox test, where
one can only evaluate the polynomial at chosen points and
(ii) whitebox test, where one has access to the description of the input circuit. %
A blackbox test for a family of polynomials is essentially the same as finding a hitting set---a set of points such that any nonzero polynomial in that family
evaluates to a nonzero value on at least one of the points in the set. 
This work concerns finding hitting sets for a special model called read-once 
oblivious arithmetic branching programs (ROABP).

An \emph{arithmetic branching program} (ABP) is a specialized
arithmetic circuit.  It is the arithmetic analogue of a Boolean
branching program (also known as a binary decision diagram).  It is a
directed layered graph, with edges going from a layer of vertices to
the next layer.  The first and the last layers have one vertex each,
called the source and the sink,
respectively. %
Each edge of the graph has a label, which is a ``simple'' polynomial,
for example, a univariate polynomial.  For any path $p$, its weight is
defined to be the product of labels on all the edges in $p$.  The ABP
computes a polynomial which is the sum of weights of all the paths
from the source to the sink.  Apart from its size, another important
parameter for an ABP is its width.  The width of an ABP is the maximum
number of vertices in any of its layers.  See
\expref{Definition}{def:ABP} for a formal definition of ABP.

ABPs are a strong model for computing polynomials. 
It is known that for any size-$s$ arithmetic circuit 
of degree bounded by $\poly(s)$, one can find an ABP of size
$\quasipoly(s)$ computing the same
polynomial~\cite{VSBR83,Val79,Ber84} (see~\cite{Koi12} for a complete
proof).  Even when the width is restricted to a constant, the ABP
model is quite powerful.  Ben-Or and Cleve~\cite{BOC92} have shown
that width-$3$ ABPs have the same expressive power as
polynomial-sized arithmetic formulas.

An ABP is a \emph{read-once oblivious ABP} or ROABP
if each variable occurs in at most one layer of the edges
and every layer has exactly one variable.\footnote{In a \emph{read-once ABP}, each variable occurs only once
on every source-sink path.
An ROABP is a read-once ABP where every occurrence of a variable is in the same layer.}
The read-once property severely restricts the power of the ABP\@. 
There is an explicit family of polynomials that can be computed by simple depth-$3$ ($\Sigma \Pi \Sigma$)
circuits but requires 
exponential-size 
ROABPs~\cite{KNS16} to compute it.
The order of the variables in the consecutive layers is said to be the \emph{variable order} of the ROABP\@.
The variable order affects the size of the minimal ROABP computing a given polynomial.
There are polynomials which have a small ROABP in one variable order but require
exponential size in another variable order.
Nisan~\cite{Nis91} gave an exact characterization of the polynomials computed by width-$w$ ROABPs
in a certain variable order.
In particular, he gave exponential lower bounds for this model.\footnote{The work of~\cite{Nis91} is actually on non-commutative ABPs but the same results apply to ROABP\@.}

The question of whitebox identity testing of ROABPs has been settled by Raz and Shpilka~\cite{RS05},
who gave a 
polynomial-time
algorithm for this. 
However, though ROABPs are a relatively well-understood model,
we still do not have a 
polynomial-time
blackbox algorithm.
The blackbox PIT question is studied with two variations: one where we know
the variable order of the ROABP and the other where we do not know it.
For known-order ROABPs, Forbes and Shpilka~\cite{FS13} gave the first efficient blackbox test
with $(ndw)^{O(\log n)}$ time complexity, where $n$ is the
number of variables, $w$ is the width of the ROABP and
$d$ is the 
individual-degree %
bound of each variable.
For the unknown-order case, Forbes et al.~\cite{FSS14} gave an 
$n^{O(d \log w \log n)}$-time blackbox test.
Observe that the complexity of their algorithm is quasi-polynomial only when $d$ is small. 
Subsequently, Agrawal et al.~\cite{AGKS15} removed the exponential dependence on the individual degree. 
They gave an $(ndw)^{O(\log n)}$-time blackbox test for the unknown-order case. 
Note that these results remain quasi-polynomial even in the case of constant width.
Studying ROABPs has also led to PIT results for other computational models,
for example, 
subexponential-size 
hitting sets for depth-$3$ multilinear circuits~\cite{OSV15}
and 
subexponential-time
whitebox test for read-$k$ oblivious ABPs~\cite{AFSSV15}.

Another motivation to study ROABPs comes from their
Boolean analogues, called read-once ordered branching programs (ROBP).\footnote{ROBPs
are also known as Ordered Binary Decision Diagrams (OBDDs).}
ROBPs have been studied extensively, with regard to the $\RL$ versus $\L$ question 
(randomized log-space versus log-space).
The problem of finding hitting sets for ROABP can be viewed as an analogue of finding
pseudorandom generators (PRG) for ROBP\@.
A pseudorandom generator for a Boolean function $f$ is
an algorithm which can generate a probability distribution 
(with a small sample space) with the property that $f$ cannot distinguish it from the uniform random distribution
(see~\cite{AB09} for details). 
Constructing an optimal PRG for ROBP, \ie, with $O(\log n)$ seed length or 
polynomial-size sample space,
would imply $\RL = \L$.
Although the known pseudorandom
generators for ROBPs and hitting-set generators for ROABPs 
in similar settings have similar complexity,
there is no known way to translate the construction of one to another.
The best known PRG is of seed length $O(\log^2 n)$ 
($n^{O(\log n)}$-size sample space),
when variable order is known~\cite{N90,INW94, RR99}.
On the other hand, in the unknown-order case, the best known seed length
is of size $n^{1/2 + o(1)}$ \cite{IMZ12}.
Finding an $O(\log n)$-seed PRG even for constant-width known-order ROBPs has been a challenging open question.
Though, some special cases of this question have been solved---width-2 ROBPs~\cite{BDVY13}, 
or nearly solved---permutation and regular ROBPs~\cite{BRRY14,BV10,KNP11,De11,Ste12}. 

Our first result addresses the analogous question in the arithmetic setting. 
We give the first
polynomial-time %
blackbox test for constant-width known-order ROABPs.
However, it works only for zero or large characteristic fields.
Our idea is inspired by the pseudorandom generator 
for ROBPs by Impagliazzo, Nisan and Wigderson~\cite{INW94}.
While their result does not give better PRGs for the constant-width case, we are able to achieve this in
the arithmetic setting.

\begin{theorem*}[\expref{Theorem}{thm:knownROABPhs}]
Let $\mathcal{C}$ be the class of $n$-variate, 
individual-degree-$d$  
polynomials in $\F[\x]$
computed by a width-$w$ ROABP in the variable order $(x_1, x_2, \dots, x_n)$.
Then a hitting set of size $dn^{O(\log w)}$ can be constructed for $\mathcal{C}$,
when $\cha(\F) = 0$ or $\cha(\F) >ndw^{\log n }$.
\end{theorem*}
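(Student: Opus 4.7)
The plan is to construct a polynomial map $\Phi_n\colon \F^s\to \F^n$ with small $s$ such that, for every $P$ in the class, $P(\Phi_n(\t))$ is a nonzero polynomial in $\t$, and then take the hitting set to be a Schwartz--Zippel grid in the seed space. Following the Impagliazzo--Nisan--Wigderson template, I build $\Phi_n$ by \emph{recursive doubling}: given generators for length~$n/2$, merge them into a length-$n$ generator using only $O(1)$ fresh seed variables per level. Iterating this for $\log n$ levels produces a generator with $O(\log n)$ seed variables of $\mathrm{poly}(n,d,w)$ degree; careful bookkeeping of the degree growth will pin down the final hitting-set size as $n^{O(\log w)}$ rather than the naive $n^{O(\log n)}$.

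The structural fact that powers the merge is the rank bound on the partial-derivative matrix of an ROABP. For any cut of the variable order into $\x_L=(x_1,\dots,x_{n/2})$ and $\x_R=(x_{n/2+1},\dots,x_n)$, the matrix product $M_1(x_1)\cdots M_n(x_n)$ presents the polynomial as
\[
  P(\x) \;=\; \sum_{i=1}^{w} A_i(\x_L)\, B_i(\x_R),
\]
with each $A_i, B_i$ itself a width-$w$ ROABP on the induced sub-order; equivalently, the partial-derivative matrix of $P$ across this cut has rank at most $w$.

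For the merge, given $\Phi_{n/2}(\sigma)$, I set
\[
  \Phi_n(\sigma,\tau) \;:=\; \bigl(\Phi_{n/2}(\sigma),\ \Phi_{n/2}(h_\tau(\sigma))\bigr),
\]
where $h_\tau$ is a polynomial hash in a fresh seed $\tau$ of size $O(1)$. Writing $\tilde A_i(\sigma):=A_i(\Phi_{n/2}(\sigma))$ and $\tilde B_i(\sigma'):=B_i(\Phi_{n/2}(\sigma'))$, a strengthening of the inductive hypothesis (preservation of linear independence, valid because $\F$-linear combinations of width-$w$ ROABPs still fit into small-width ROABPs) guarantees that the bilinear form $\sum_i \tilde A_i(\sigma)\tilde B_i(\sigma')$ in the disjoint pair $(\sigma,\sigma')$ is nonzero whenever $P$ is. The remaining task is to choose $h_\tau$ so that replacing $\sigma'$ by $h_\tau(\sigma)$ preserves non-vanishing in $(\sigma,\tau)$; since the form has rank at most $w$, a Vandermonde-type polynomial substitution of degree $\mathrm{poly}(w)$ in a single new $\tau$ should suffice to rule out cancellations against any $w$-dimensional target space.

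The main obstacle is the merge lemma itself: producing a polynomial $h_\tau$ with $|\tau|=O(1)$ that \emph{provably} preserves non-vanishing for every rank-$w$ bilinear form built from ROABP-computable pieces, while growing intermediate degrees only by a factor of $\mathrm{poly}(w)$ per level. Unrolling the recursion, the final non-vanishing check reduces to the non-singularity of a Vandermonde-like minor of order up to $w^{\log n}$, which forces the characteristic assumption $\cha(\F)=0$ or $\cha(\F) > n\,d\,w^{\log n}$ stated in the theorem. Once the merge lemma is in place, $\log n$ levels of unrolling followed by a Schwartz--Zippel grid on the resulting $O(\log n)$-variate generator of controlled degree yields the claimed $d\,n^{O(\log w)}$-time hitting set.
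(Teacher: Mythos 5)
Your high-level framework (divide-and-conquer recursion on the variable order, the rank-$\le w$ bound on the Nisan partial-derivative matrix across the cut, degree growth by $\mathrm{poly}(w)$ per level) matches the paper. But there are two serious problems, one of which you flag yourself and one of which you do not.

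The first is the merge lemma, which you explicitly leave open. This is not a routine gap: it is the entire technical content of the result. The paper's merge is a \emph{bivariate} statement (its Lemma~\ref{lem:bivariatehs}): if $f(x_1,x_2)$ has $\rank(M_f)\le w$ and $f\ne 0$, then $f(t^w,\,t^w+t^{w-1})\ne 0$. The proof picks out the top nonzero anti-diagonal of $M_f$, observes that rank~$\le w$ forces it to have at most $w$ nonzero entries, and then shows that the coefficients of $t^{w\ell},\dots,t^{w\ell-w+1}$ in the image are obtained from those entries by multiplication with a binomial-coefficient matrix ${j_a\choose b-1}$, which is full rank when the $j_a$ are distinct (hence the characteristic restriction). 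Your ``Vandermonde-type polynomial substitution of degree $\mathrm{poly}(w)$ in a fresh $\tau$ should suffice'' is pointing in the right general direction, but you give no construction and no argument; a reviewer cannot tell whether you have the diagonal/binomial-Vandermonde idea in mind or something that would fail.

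The second problem is structural and would sink the bound even if you proved some merge lemma of your proposed form. Your generator is
\[
  \Phi_n(\sigma,\tau)=\bigl(\Phi_{n/2}(\sigma),\,\Phi_{n/2}(h_\tau(\sigma))\bigr),
\]
adding $O(1)$ fresh seed variables per level and ending with an $O(\log n)$-variate generator of $\mathrm{poly}(n,d,w)$ degree, followed by a Schwartz--Zippel grid. But a grid for a polynomial in $O(\log n)$ variables each of degree $\mathrm{poly}(n,d,w)$ has size $(ndw)^{\Theta(\log n)}$ --- this is exactly the known quasi-polynomial bound of Forbes--Shpilka, not the claimed $d\,n^{O(\log w)}$. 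And if the degree of the older seed variables grows multiplicatively by $\mathrm{poly}(w)$ at each subsequent level (as a naive composition with $h_\tau$ would cause), the grid size is $w^{\Theta(\log^2 n)}$, which is even worse. The paper avoids this entirely by having the merge use \emph{no fresh seed at all}: the bivariate map sends the pair $(x_{2i-1},x_{2i})$ to the two univariate polynomials $(t_i^w,\,t_i^w+t_i^{w-1})$ in a single variable $t_i$, so one round halves the number of variables, and after $\log n$ rounds the generator is genuinely univariate of degree $ndw^{\log n}$. The hitting set is then just degree-plus-one field points, giving $O(ndw^{\log n})=d\,n^{O(\log w)}$ directly. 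To salvage your version you would need to collapse your $O(\log n)$ seed variables back to one, which in effect forces you into the paper's structure.
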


When $w < n$, the size of our hitting set is smaller than the
previously known hitting sets. Furthermore, even in the regime when $w
\geq n$, the size of our hitting set matches the previously best known
hitting sets.
We show that for a nonzero bivariate polynomial $f(x_1,x_2)$ computed by a width-$w$ ROABP, 
the univariate polynomial $f(t^w, t^w+t^{w-1})$ is nonzero. 
For this, we use the notion of rank of the partial derivative matrix of a polynomial,
defined by Nisan~\cite{Nis91}.
Our argument is that
the rank of the partial derivative matrix of
any bivariate polynomial which becomes zero on $(t^w, t^w+t^{w-1})$
is more than $w$, while
for a polynomial computed by a width-$w$ ROABP, this rank is at most $w$.
We use the map $(x_1,x_2) \mapsto (t^w, t^w+t^{w-1})$ recursively in $\log n$ rounds 
to achieve the above mentioned hitting set.
Our technique has a crucial difference from the previous works on ROABPs~\cite{FSS14,FS13,AGKS15}.
The starting point in all the previous techniques is a monomial map,
\ie, each variable is mapped to a monomial.
On the other hand, we argue with a polynomial map directly (where
each variable is mapped to a univariate polynomial).
We believe that our approach could
lead to a 
polynomial-size
hitting set for ROABPs and we now describe a
concrete construction that we conjecture works.
The goal would be to obtain a univariate $n$-tuple $(p_1(t), \dots, p_n(t))$,
such that any polynomial which becomes zero on $(p_1(t), \dots, p_n(t))$ must have rank or evaluation dimension
higher than $w$.
We conjecture that $(t^r, (t+1)^r, \dots, (t+n-1)^r)$ is one such tuple,
where $r$ is polynomially large (\expref{Conjecture}{con:polyROABP}).

We believe that these ideas from the arithmetic setting can help in
constructing an optimal PRG for constant-width ROBP.

Our second result is for the class of polynomials which are computable by ROABPs in any variable order. %
To be precise, a polynomial $f(\x)$ is in this class if for every permutation of the variables,
there exists an ROABP of width-$w$ that computes $f(\x)$ in that variable order.
This class of polynomials has a slightly better hitting set than the class of polynomials computed by ROABPs 
in a particular variable order, but still no
polynomial-size 
hitting set is known. 
The previously best known hitting set for them has
size $d^{O(\log w)}(nw)^{O(\log \log w)}$ \cite{FSS14}.
We improve this to $(n \degree w)^{O(\log \log w)}$. %

\begin{theorem*}[\expref{Theorem}{thm:commROABPhs}]
For $n$-variate, 
individual-degree-$\degree$  %
polynomials computed by width-$w$ 
ROABPs in any order, %
a hitting set of size $(n \degree w)^{O(\log \log w)}$ can be constructed.
\end{theorem*}

To %
obtain  %
this result 
we follow the approach of Forbes et al.~\cite{FSS14},
which used the notion of rank
concentration or low-support concentration, a technique introduced by Agrawal et al.~\cite{ASS13}.
We achieve rank concentration more efficiently using the basis isolation technique of Agrawal et al.~\cite{AGKS15}.
The same technique also yields a more efficient concentration in depth-$3$ set-multilinear circuits 
(see \expref{Section}{sec:prelim} for the definition).
However, it is not clear if it gives better hitting sets for them.
The best known hitting set for them has size $n^{O(\log n)}$ \cite{ASS13}.

\section{Preliminaries}
\label{sec:prelim}

\subsection{Definitions and 
notation}

We write $[n]$~to denote the set $\{1,2, \dots, n\}$ 
and $[\![d]\!]$ to denote the set $\{0,1, \dots, d\}$. 
The character %
$\x$ will denote a 
list of variables, usually the list
$(x_1, x_2, \dots, x_n)$.
For a field $\F$, we denote 
the ring of polynomials over $\F$ by $\F[\x]$, and 
the field of rational functions over $\F$ by $\F(t)$.
For a set $\x$ of $n$ variables
and for an exponent $\a = (a_1, a_2, \dots, a_n) \in \cbrace{0, 1, 2, \dots}^n$,
we  denote the monomial $\prod_{i=1}^n x_i^{a_i}$ by $\xa$.
The \emph{support} of a monomial $\xa$, denoted by $\Supp(\a)$, 
is the set of \index{support of a monomial, $\Supp(\cdot)$}
variables appearing in that monomial, \ie,
$ \{x_i \mid i \in [n], a_i >{} 0\}$. 
The \emph{support size} of a monomial is the cardinality of its support, 
denoted by $\supp(\a)$. 
A monomial is said to be $\ell$-support if its support size is $\ell$
and $(< \ell)$-support if its support size is $< \ell$.
For a polynomial $P(\x)$, the coefficient of a monomial $\xa$ in $P(\x)$ 
is denoted by $\coeff_P(\xa)$. \index{coefficient of a monomial, $\coeff_P(\cdot)$}

For a monomial $\xa$, 
the sum %
$\sum_i a_i$ is said to be its \emph{degree} and 
$a_i$ is said to be its \emph{degree in variable $x_i$} for each $i$.
Similarly, for a polynomial $P$, its degree (or degree in $x_i$) is the maximum degree 
(or maximum degree in $x_i$) of any monomial in $P$ with a nonzero coefficient. 
We define the \emph{individual degree} of $P$ to be 
 $\max_i\{ \deg_{x_i}(P)\}$,
where $\deg_{x_i}$ denotes degree in $x_i$.

To better understand polynomials computed by ROABPs, we often use 
polynomials over an algebra $\A$, \ie, polynomials whose coefficients 
come from $\A$. 
Matrix algebra is the vector space of matrices equipped with the 
matrix product.
$\F^{m \times n}$ represents the set of all $m \times n$ matrices over the field $\F$.
Note that the algebra of $w \times w$ matrices, has dimension $w^2$.

We often view a vector/matrix with polynomial entries,
as a polynomial with vector/matrix coefficients. \index{polynomial over matrices}
For example,
$$ D(x,y)  = 
\begin{pmatrix} 1+x & y - xy \\ x+y & 1 +xy \end{pmatrix}= 
\begin{pmatrix} 1 & 0 \\ 0 & 1 \end{pmatrix} 1 + 
\begin{pmatrix} 1 & 0 \\ 1 & 0 \end{pmatrix} x + 
\begin{pmatrix} 0 & 1 \\ 1 & 0 \end{pmatrix} y + 
\begin{pmatrix} 0 & -1 \\ 0 & 1 \end{pmatrix} xy\,.
$$
Here, the $\coeff_D$ operator will return a matrix for any monomial,
for example,
\[
  \coeff_D(y)= \begin{pmatrix} 0 & 1 \\ 1 & 0 \end{pmatrix}\,.
\]
For a polynomial $D(\x) \in \A[\x]$ over an algebra, its \index{polynomial over an algebra}
\emph{coefficient space} is the space spanned by its coefficients. \index{coefficient space}

For a matrix~$R$, 
we denote its entry in the $i$-th row and $j$-th column by $R(i,j)$. %

As mentioned earlier, a deterministic blackbox PIT is equivalent to constructing a hitting set.
A set of points $\Hit \in \F^n$ is called a \emph{hitting set}
for a class $\mathcal{C}$ of $n$-variate polynomials if for any nonzero polynomial $P$
in $\mathcal{C}$, there exists a point in $\Hit$ where $P$ evaluates to a nonzero
value. 
\subsection{Arithmetic branching programs}
\label{sec:ABP}
\begin{definition}[Arithmetic Branching Program (ABP)]
\label{def:ABP}
An ABP is a layered directed acyclic graph with $q+1$ layers of vertices
$\{V_0,V_1, \dots, V_{q}\}$ and a source $a$ 
and a sink $b$
such that all the edges of the graph only go
from $a$ to $V_0$, $V_{i-1}$ to $V_i$ for any $i \in [q]$ and
$V_q$ to $b$.
The edges have univariate polynomials as their weights %
and as a convention, the edges going out of $u$
and the edges going into $t$ have constant weights, 
\ie, weights from the field $\F$.
The ABP is said to compute the polynomial
\[
  f(\x) = \sum_{p \in \paths(a, b)} \prod_{e \in p} W(e)\,,
\]

where $W(e)$ is the weight of the edge $e$.
\end{definition}

The ABP has width-$w$ if $\abs{V_i} \leq w$ for all $i \in [\![q]\!]$.
Without loss of generality we can assume $\abs{V_i} = w$ for each $i \in [\![q]\!]$.

It is well-known that the sum over all paths 
in a layered graph can be represented by an iterated matrix multiplication.
To see this, let the set of nodes in $V_i$ be $\{v_{i,j} \mid j \in [w]\}$.
It is easy to see that the polynomial computed by the ABP is the same as
$A^{\transpose} (\prod_{i=1}^{q} D_i ) B $,
where $A, B \in \F^{w \times 1}$ 
and $D_i$ is a $w \times w$ matrix for $1 \leq i \leq q$ such that 
\begin{align*}
A(\ell) &= W(a,v_{0,\ell}) &&\text{ for } 1 \leq \ell \leq w\,,\\
D_i(k, \ell) &= W(v_{i-1,k},v_{i,\ell}) &&\text{ for } 1 \leq \ell,k \leq w \text{ and } 1 \leq i \leq q\,,\\
B(k) &= W(v_{q,k},b) &&\text{ for } 1 \leq k \leq w\,.
\end{align*}

\subsubsection{Read-once oblivious ABP} 
An ABP is called a \emph{read-once oblivious ABP} (ROABP) \index{read-once oblivious ABP (ROABP)}
if the edge weights in different layers 
are univariate polynomials in distinct variables.
Formally,
there is a permutation $\pi$ on the set $[q]$
such that the entries in the $i$th matrix $D_i$ are
univariate polynomials over the variable $x_{\pi(i)}$,
\ie, they come from the polynomial ring
$\F[x_{\pi(i)}]$.
Here, $q$ is the same as $n$, the number of variables.
The order $(x_{\pi(1)}, x_{\pi(2)}, \dots, x_{\pi(n)})$ 
is said to be the variable order of the ROABP\@.

Viewing $D_i({x_{\pi(i)}}) \in \F^{w \times w}[x_{\pi(i)}]$ as a polynomial over the matrix algebra,
we can write the polynomial computed by an ROABP as 
$$f(\x) = A^{\transpose}
 D_1(x_{\pi(1)})  D_2(x_{\pi(2)}) \cdots D_n(x_{\pi(n)}) B\,.$$
An equivalent representation of a width-$w$ ROABP can be 
$$f(\x) = 
 D_1(x_{\pi(1)})  D_2(x_{\pi(2)}) \cdots D_n(x_{\pi(n)})\,,$$
where $D_1 \in \F^{1 \times w}[x_{\pi(1)}]$,
$D_i \in \F^{w \times w}[x_{\pi(i)}]$ for $2 \leq i \leq n-1$ 
and $D_n \in \F^{w \times 1}[x_{\pi(n)}]$.

\subsubsection{Any-order ROABP}    %
A polynomial $f(\x)$ is said to be computed by width-$w$
ROABPs in any order,
if
for every permutation $\sigma$ of the variables,
there exists a width-$w$ ROABP
in the variable order $\sigma$
that computes the polynomial $f(\x)$.
\subsubsection{Set-multilinear circuits}
\label{sec:setmulti}
A depth-$3$ set-multilinear circuit is a circuit of the form
$$f(\x) = \sum_{i=1}^k l_{i,1}(\x_1) \; l_{i,2}(\x_2) \cdots \; l_{i,q}(\x_q)\,,$$
where $l_{i,j}$s are linear polynomials and $\x_1, \x_2, \dots, \x_q$
form of a partition of the 
set $\x$ of variables.
It is known that these circuits are subsumed by ROABPs~\cite{FSS14}. 
However, the polynomials computed by these circuits may not be 
computable by ROABPs in any order. %
For example, the $2n$-variate polynomial $(x_1 + y_1)(x_2 + y_2) \cdots (x_n + y_n)$
has a linear-size set-multilinear circuit.
But, every ROABP in the variable sequence $(\lis{x}{,}{n},\lis{y}{,}{n})$
that computes it has width $\ge 2^n$ (follows from Nisan's characterization~\cite{Nis91}).

Note that
ROABPs can compute polynomials with 
individual-degree  %
$\ge 1$,
but set-multilinear circuits cannot.
It is not known whether all multilinear polynomials computed by ROABPs in any order can also be
computed by 
polynomial-size
set-multilinear circuits.

A set-multilinear circuit has a corresponding polynomial over a commutative algebra.
For the polynomial $f(\x)$ above, consider the polynomial over a $k$-dimensional algebra
$$D(\x) = D_1(\x_1) D_2(\x_2) \cdots D_q(\x_q)\,,$$
where $D_j = (l_{1,j},  l_{2,j}, \dots, l_{k,j})$
and the algebra product is coordinate-wise product.
It is easy to see that $f = (1,1, \dots, 1) \cdot  D$.
Note that the polynomials $D_i$s are over a commutative algebra, 
that is, the order of the $D_i$s in the product does not matter.
Hence, some of our techniques for 
any-order %
ROABPs also work for set-multilinear circuits.

\section{Hitting set for known-order ROABP}

\subsection{Bivariate ROABP} \index{bivariate ROABP}
To construct a hitting set for ROABPs, we start with the bivariate case.
Recall that a bivariate ROABP is of the form
$A^\transpose D_1(x_1) D_2(x_2) B $,
where $A, B \in \F^{w \times 1}$, $D_1 \in \F^{w\times w}[x_1]$ and 
$D_2 \in \F^{w\times w}[x_2]$.
It is easy to see that 
 a bivariate polynomial $f(x_1,x_2)$ computed
by a width-$w$ ROABP can be written as $f(x_1,x_2) = \sum_{r=1}^w g_r(x_1)h_r(x_2)$. 
To construct a hitting set for this polynomial,
we will use the notion of a {partial derivative matrix},
defined by
Nisan~\cite{Nis91} in the context of lower bounds.
Let the individual degree of the polynomial $f \in \F[x_1, x_2]$
be bounded by $\degree$.
The \emph{partial derivative matrix} $M_f$ for $f$ \index{partial derivative matrix}
is a $(\degree+1)\times(\degree+1)$ matrix with 
$$M_f(i,j) = \coeff_f(x_1^i x_2^j) \in \F\,,$$
for all $i,j \in \doublesquare{d}$.
It is known that the rank of the matrix $M_f$
equals the smallest possible width of any ROABP computing $f$ \cite{Nis91}. 

\begin{lemma}[rank $\leq$ width]
\label{lem:rankwidth}
For any polynomial $f(x_1,x_2)= \sum_{r=1}^w g_r(x_1)h_r(x_2)$, 
we have %
$\rank(M_f) \leq w$.
\end{lemma}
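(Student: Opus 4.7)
The plan is to use the standard fact that a matrix is a sum of $k$ rank-one matrices if and only if its rank is at most $k$, and show that each summand $g_r(x_1)h_r(x_2)$ contributes a rank-one matrix to $M_f$.

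First I would expand each univariate factor in its monomial basis: write
\[
g_r(x_1) = \sum_{i \in \doublesquare{d}} \alpha_{r,i}\, x_1^i, \qquad h_r(x_2) = \sum_{j \in \doublesquare{d}} \beta_{r,j}\, x_2^j,
\]
so that the coefficient of $x_1^i x_2^j$ in the product $g_r(x_1)h_r(x_2)$ is exactly $\alpha_{r,i}\beta_{r,j}$. This shows that the partial derivative matrix of $g_r h_r$ is the outer product $\boldsymbol{\alpha}_r \boldsymbol{\beta}_r^{\transpose}$, where $\boldsymbol{\alpha}_r = (\alpha_{r,i})_i$ and $\boldsymbol{\beta}_r = (\beta_{r,j})_j$, hence has rank at most one.

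Next, because the operator $f \mapsto M_f$ is $\F$-linear in $f$ (each entry is a coefficient of a fixed monomial), we get
\[
M_f = \sum_{r=1}^{w} M_{g_r h_r} = \sum_{r=1}^{w} \boldsymbol{\alpha}_r \boldsymbol{\beta}_r^{\transpose}.
\]
Subadditivity of rank then gives $\rank(M_f) \leq \sum_{r=1}^w \rank(\boldsymbol{\alpha}_r \boldsymbol{\beta}_r^{\transpose}) \leq w$, which is the desired bound.

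There is really no obstacle here: the entire argument is essentially the observation that $M_f$ is linear in $f$ and that a product of univariates gives a rank-one coefficient matrix. The only bookkeeping issue would be to make sure the individual degree bound $d$ is large enough to accommodate all the $g_r$'s and $h_r$'s, but this is automatic since $d$ is defined as the individual degree of $f$ and we can pad the $\alpha_{r,i}, \beta_{r,j}$ with zeros if some $g_r$ or $h_r$ has degree less than $d$.
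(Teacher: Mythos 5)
Your proof is correct and follows exactly the same route as the paper: express each $M_{g_r h_r}$ as an outer product of the coefficient vectors of $g_r$ and $h_r$, then use linearity of $f \mapsto M_f$ and subadditivity of rank. Nothing to change.
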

\begin{proof}
Let us define $f_r = g_r h_r$, for all $r \in [w]$. 
Clearly, $M_f = \sum_{r=1}^w M_{f_r}$, as $f = \sum_{r=1}^w f_r$.
We will show that $\rank(M_{f_r}) \leq 1$, for all $r \in [w]$. 
As $f_r = g_r(x_1) h_r(x_2)$, its coefficients can be written as a product of 
coefficients from $g_r$ and $h_r$, \ie,
$$\coeff_{f_r}(x_1^i x_2^j) = \coeff_{g_r}(x_1^i) \coeff_{h_r}(x_2^j)\,.$$
Now, it is easy to see that $$M_{f_r} = u_r v_r^{\transpose}\,,$$
where $u_r,v_r \in \F^{d+1}$ with 
$u_r = (\coeff_{g_r}(x_1^i))_{i=0}^d$
and $v_r = (\coeff_{h_r}(x_2^i))_{i=0}^d$.

Thus, $\rank(M_{f_r}) \leq 1$ and $\rank(M_f) \leq w$.
\end{proof}
One can also show that if $\rank(M_f) =w$ then
there exists a width-$w$ ROABP computing $f$.
We skip this proof as we will not need it. 
Now, using the above lemma we give a hitting set
for bivariate ROABPs. 

\begin{lemma}
\label{lem:bivariatehs}
Suppose $\cha(\F) =0$, or $\cha(\F) >d$.
Let
\[
  f(x_1,x_2)= \sum_{r=1}^w g_r(x_1) h_r(x_2)
\]
be a nonzero bivariate polynomial over $\F$
of %
individual degree $d$. 
Then $f(t^w, t^w+t^{w-1}) \neq 0$.
\end{lemma}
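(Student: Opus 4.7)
The plan is to argue the contrapositive: assuming $F(t) := f(t^w, t^w+t^{w-1}) \equiv 0$ and $f \not\equiv 0$, I will show $\rank(M_f) \geq w+1$, which together with Lemma~\ref{lem:rankwidth} contradicts the hypothesis on $f$. The guiding observation is that the substitution $\phi: \F[x_1,x_2] \to \F[t]$ sending $x_1\mapsto t^w$, $x_2 \mapsto t^w+t^{w-1}$ has kernel generated by $p(x_1,x_2) := (x_2 - x_1)^w - x_1^{w-1}$, since $\phi(p) = (t^{w-1})^w - (t^w)^{w-1}=0$; so $F \equiv 0$ is the same as $p \mid f$.

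\textbf{Layered expansion.} The first step is to expand $F(t)$ in a way that separates low-degree and high-degree contributions. Writing $t^w + t^{w-1} = t^{w-1}(t+1)$,
\[ F(t) \;=\; \sum_{i,j} M_f(i,j)\, t^{wi + (w-1)j}(t+1)^j \;=\; \sum_{s\ge 0} t^{(w-1)s} G_s(t), \]
where $G_s(t) := \sum_{i+j=s} M_f(i,j)\, t^i(t+1)^{s-i}$ has degree at most $s$ in $t$. Hence the $s$-th block $t^{(w-1)s}G_s(t)$ lives in the degree window $[(w-1)s,\,ws]$. For $s \le w-2$ this window is disjoint from every other layer's, so $F\equiv 0$ forces $G_s \equiv 0$ individually; and because $\{t^i(t+1)^{s-i}\}_{i=0}^{s}$ is a basis of $\F[t]_{\le s}$ over any field (its expansion matrix into $\{1,t,\dots,t^s\}$ is upper triangular with $1$'s on the diagonal), this yields $M_f(i,j)=0$ for every $(i,j)$ with $i+j\le w-2$.

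\textbf{Peeling the overlaps.} The remaining layers overlap only mildly. At $s = w-1$ only the single degree $t^{w(w-1)}$ is shared with the $s=w$ block, so peeling off the isolated coefficients of $G_{w-1}$ still forces $M_f(i, w-1-i)=0$ for $i\le w-2$ and couples the one surviving entry $M_f(w-1,0)$ to $M_f(0,w)$ via the overlap equation. The next overlap (between $s=w$ and $s=w+1$) is one degree wider, and so on. The hypothesis $\cha(\F)=0$ or $\cha(\F)>d$ is exactly what makes every binomial coefficient $\binom{s-i}{k-i}$ arising in $[t^k]G_s$ invertible, so each peeled layer contributes honest new linear constraints on the entries of $M_f$, and an inductive argument along $s$ eventually forces $\rank(M_f) \geq w+1$.

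\textbf{Main obstacle and a back-up route.} I expect the main obstacle to be the bookkeeping in the overlapping region, because the windows of consecutive layers overlap in progressively more points once $s\ge w-1$, and the chain of constraints must be organised with care to invoke the characteristic hypothesis at the right spots. A cleaner back-up route is to exploit the kernel characterisation $f=pq$ directly: one checks that $M_p$ has rank $w+1$ by inspection (its nonzero entries lie along the anti-diagonal $i+j=w$ together with one extra entry at $(w-1,0)$), and then argues that multiplying $p$ by a nonzero $q$ cannot drop this rank --- for instance by picking a $(w+1)\times(w+1)$ submatrix of $M_{pq}$ aligned with the lexicographically smallest monomial of $q$, which (up to higher-order corrections coming from the other monomials of $q$) reproduces the rank-$(w+1)$ block of $M_p$.
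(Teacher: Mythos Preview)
Your layered expansion $F(t)=\sum_{s\ge 0} t^{(w-1)s}G_s(t)$ with $G_s(t)=\sum_{i+j=s}M_f(i,j)\,t^i(t+1)^{s-i}$ is exactly the structure the paper uses, but the paper reads it from the \emph{top} rather than the bottom. It sets $\ell=\max\{i+j:M_f(i,j)\neq 0\}$; since $\rank(M_f)\le w$, the $\ell$-th diagonal has at most $w$ nonzero entries (those rows of $M_f$ are visibly independent). Then only the $\ell$-th layer can contribute to degrees in $(w(\ell-1),\,w\ell]$, and the $w'\times w$ matrix $C$ with $C(a,b)=\binom{j_a}{b-1}$ has full row rank (a short Vandermonde-type argument, using $\cha(\F)>d\ge j_a$ so that the $j_a$'s are distinct in $\F$). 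This immediately gives a nonzero coefficient of $F$ --- no induction, no overlap bookkeeping.

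Your bottom-up peeling has a real gap. You correctly get $M_f(i,j)=0$ for $i+j\le w-2$ and the single relation $M_f(w-1,0)+M_f(0,w)=0$ at the first overlap, but you never say how further peeling yields $\rank(M_f)\ge w+1$. It cannot force $M_f\equiv 0$ (the polynomial $p=(x_2-x_1)^w-x_1^{w-1}$ itself lies in the kernel), and once $s\ge w$ the overlaps cascade: layer $s$ overlaps layer $s+1$ in $s-w+2$ degrees, and for $s\ge 2(w-1)$ three consecutive layers interact. Extracting a rank lower bound from this growing linear system is not at all routine, and ``the binomials are invertible'' does not by itself organise the constraints into a rank statement.

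Your back-up route is attractive but also incomplete. The identification $\ker\phi=(p)$ is fine (after the change of variables $u=x_2-x_1$, $v=x_1$ it is the standard kernel of $u\mapsto t^{w-1}$, $v\mapsto t^w$), and $\rank(M_p)=w+1$ is clear from the anti-diagonal together with $\cha(\F)>d\ge w$. But ``pick the lex-smallest monomial $x_1^ax_2^b$ of $q$ and look at the $(w{+}1)\times(w{+}1)$ block of $M_{pq}$ at $(a,b)$'' does not work as stated: a lex-higher monomial $x_1^{a'}x_2^{b'}$ need not satisfy $a'\ge a$ and $b'\ge b$, so its shifted copy of $M_p$ can land inside your block and the ``higher-order corrections'' are not triangular in any obvious sense. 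You would need a genuinely new argument here to show $\rank(M_{pq})\ge w+1$ --- and the cleanest such argument is, in fact, the paper's top-down one.
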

\begin{proof}
Let $\tilde{f}(t)$ be the polynomial after the substitution, \ie, $\tilde{f}(t) = f(t^w, t^w+t^{w-1})$.
Any monomial $x_1^ix_2^j$ will be mapped to the polynomial $t^{wi} (t^w+t^{w-1})^j$,
under the mentioned substitution. 
The highest power of $t$ coming from this polynomial is $t^{w(i+j)}$. 
We will cluster together all the monomials for which this highest power is the same,
\ie, $i+j$ is the same. 
The set of coefficients corresponding to any such cluster of monomials will
form a \emph{diagonal} in the matrix $M_f$. The set $\{M_f(i,j) \mid i +j =k\}$
is defined to be the \emph{$k$-th diagonal} of $M_f$, for all $0 \leq k \leq 2d$.
Let $\ell$ be the largest number such that the $\ell$-th diagonal has
at least one nonzero element, \ie,
$$\ell = \max \{i+j \mid M_f(i,j) \neq 0\}\,.$$
As $\rank(M_f) \leq w$ (from \expref{Lemma}{lem:rankwidth}),
we claim that the $\ell$-th diagonal has at most $w$ nonzero elements.
To see this, let $\{(i_1,j_1),(i_2,j_2), \dots, (i_{w'}, j_{w'})\}$
be the set of indices where the $\ell$-th
diagonal of $M_f$ has nonzero elements, \ie, the set $\{(i,j) \mid M_f(i,j) \neq 0, \; i+j = \ell \}$.
Observe that $w' \le d+1$.
As $M_f(i,j) =0$ for any $i+j >{} \ell$, it is easy to see that 
the rows $\{M_f(i_1), M_f(i_2), \dots, M_f(i_{w'})\}$ are linearly independent. 
Thus, $w' \leq \rank(M_f) \leq w$.

Now, we claim that there exists an $r$ with $w(\ell-1) < r \leq w \ell $
such that $\coeff_{\tilde{f}}(t^r) \neq 0$. 
To see this, first observe that the highest power of $t$
to which any monomial $x_1^ix_2^j$ with $i+ j < \ell$ can contribute
is $t^{w(\ell-1)}$.
Thus, for any $w(\ell-1) < r \leq w \ell$, the term $t^r$ can come only from
 the monomials $x_1^ix_2^j$ with $i+ j \geq \ell$.
We can ignore the monomials $x_1^ix_2^j$ with $i+j >{} \ell$ as $\coeff_f(x_1^ix_2^j)=M_f(i,j)=0$, when $i+j >{} \ell$. 
Now, for any $i+j =\ell$, 
the monomial $x_1^i x_2^j$ maps to 
$$t^{w(\ell -j)} (t^w+t^{w-1})^j = t^{w \ell}(1+t^{-1})^j = \sum_{p=0}^j \binom{j}{p} t^{w\ell -p}\,.$$
Hence, for any $0 \leq p < w$,  %
$$\coeff_{\tilde{f}}(t^{w \ell -p}) = \sum_{b=1}^{w'} M_f(i_b, j_b) \binom{j_b}{p}\,.$$ 
Here we assume that if $p >{} j_b$, then $\binom{j_b}{p} =0$.
Writing the above equation in the matrix form, we get,
$$ \begin{pmatrix}\coeff_{\tilde{f}}(t^{w \ell}) \\ \vdots \\ \coeff_{\tilde{f}}(t^{w \ell-w +1}) \end{pmatrix}
=  C \begin{pmatrix} M_f(i_1,j_1) \\ \vdots  \\ M_f(i_{w'}, j_{w'}) \end{pmatrix}\,,
$$
where $C$ is a $w \times w'$ matrix with $C(a,b) = \binom{j_b}{a-1}$,
for all $a \in [w]$ and $b \in [w']$.
If all the columns of $C$ are linearly independent,
then clearly, $\coeff_{\tilde{f}}(t^r) \neq 0$ for some $w(\ell-1) < r \leq w \ell $.
We show the linear independence of the columns in \expref{Claim}{cla:independentRows}.
To show this linear independence we need to assume that the numbers $\{j_b\}_b$
are all distinct. Hence, we need the field characteristic
to be zero or strictly greater than $d$,
as $j_b$ can be as high as $d$ for some $b \in [w']$.  

\begin{claim}
\label{cla:independentRows}
Let $C'$ be the $w' \times w'$ submatrix of $C$ with $C'(a,b) = \binom{j_b}{a-1}$,
for all $a \in [w']$ and $b \in [w']$.
Then $C'$ has full rank. 
\end{claim}
\begin{proof}
We will show that for any nonzero vector $\alpha := (\alpha_1, \alpha_2, \dots, \alpha_{w'}) \in \F^{1\times w'}$, 
$\alpha C'  \neq 0$. 
Consider the polynomial 
$$h(y)= \alpha_1 + \alpha_2 \frac{y}{1!} + \alpha_3 \frac{y(y-1)}{2!} + \cdots + \alpha_{w'} \frac{y(y-1)\cdots (y-w'+2)}{(w'-1)!}\,.$$
As $h(y)$ is a nonzero polynomial 
of
degree bounded by $w'-1$, it can have at most $w'-1$ roots. 
Thus, there exists an $b \in [w']$ such that 
\[
h(j_b) = \sum_{a=1}^{w'} \alpha_a \binom{j_b}{a-1} \neq 0\,.\qedhere
\]
\end{proof}
This concludes the proof of \expref{Lemma}{lem:bivariatehs}.
\end{proof}
As mentioned above, the hitting-set proof works only when
the field characteristic is zero or greater than $d$. 
We given an example over a small characteristic field,
which demonstrates that the problem is not with the proof technique,
but with the hitting set itself.
Let the field characteristic be $2$.
Consider the polynomial $f(x_1,x_2) = x_2^{2} + x_1^{2} + x_1$.
Clearly, $f$ has a width-$2$ ROABP\@. 
For a width-$2$ ROABP, the map in \expref{Lemma}{lem:bivariatehs}
would be $(x_1,x_2) \mapsto (t^2 , t^2+t)$. 
However, $f(t^2, t^2+t) =0$ (over $\F_2$).
Hence, the hitting set does not work.
 
Now, we move on to getting a hitting set for an $n$-variate ROABP\@.

\subsection{\texorpdfstring{$n$}{n}-variate ROABP}
Observe that the map given in \expref{Lemma}{lem:bivariatehs} works 
irrespective of the degree of the polynomial,
as long as the field characteristic is large enough. 
We plan to obtain a hitting set for general $n$-variate ROABP
by applying this map recursively. 
For this, we use the standard divide and conquer technique. 
First, we make pairs of consecutive variables in the ROABP\@.
For each pair $(x_{2i-1}, x_{2i})$, we apply the map from \expref{Lemma}{lem:bivariatehs},
using a new variable $t_i$. 
Thus, we go to $n/2$ variables from $n$ variables. 
In \expref{Lemma}{lem:halving}, we use a hybrid argument to show that after this substitution
the polynomial remains nonzero. Moreover, the new polynomial
can be computed by a width-$w$ ROABP\@. 
Thus, we can again use the same map on pairs of new variables. 
By repeating the halving procedure $\log n$ times we get a univariate
polynomial.
In each round the degree of the polynomial gets multiplied by $w$. 
Hence, after $\log n$ rounds, the degree of the univariate polynomial
is bounded by $w^{\log n}$ times the original degree. 
Without loss of generality, let us assume that $n$ is a power of $2$.

\begin{lemma}[Halving the number of variables]
\label{lem:halving}
Suppose $\cha(\F)=0$, or $\cha(\F) >{} d$.
Let $f(\x)= D_1(x_1)D_2(x_2) \cdots D_n(x_n)$ be a nonzero polynomial 
of
individual 
degree~$d$
and
computed by a width-$w$ ROABP, where $D_1 \in \F^{1 \times w}[x_1]$,
$D_n \in \F^{w \times 1}[x_n]$ and $D_i \in \F^{w \times w}[x_i]$
for all $2 \leq i \leq n-1$. 
Let the map $\phi \colon \x \to \F[\t]$ be such that
for any index $1 \leq i \leq n/2$,
\begin{align*}
\phi(x_{2i-1}) &= t_i^w, \\  %
\phi(x_{2i}) &= t_i^w + t_i^{w-1}. %
\end{align*}
Then $f(\phi(\x)) \neq 0$.
Moreover, the polynomial $f(\phi(\x)) \in \F[t_1,t_2, \dots, t_{n/2}]$ 
is computed by a width-$w$ ROABP in the variable order $(t_1,t_2,\dots, t_{n/2})$.
\end{lemma}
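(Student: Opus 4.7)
The plan is to prove the structural claim directly and then establish nonzeroness by induction on the number of pairs processed, invoking Lemma~\ref{lem:bivariatehs} at each step over an appropriate function field.

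For the \emph{moreover} part, I define $E_i'(t_i) := D_{2i-1}(t_i^w)\, D_{2i}(t_i^w + t_i^{w-1})$ for each $i\in[n/2]$. Since after applying $\phi$ the variable $t_i$ appears only in the $(2i-1)$-th and $(2i)$-th factors of the product, grouping consecutive pairs yields
\[
f(\phi(\x)) \;=\; E_1'(t_1)\, E_2'(t_2) \cdots E_{n/2}'(t_{n/2}).
\]
Each $E_i'$ has dimensions $1\times w$, $w\times w$, or $w\times 1$ in the obvious way, producing a width-$w$ ROABP in the variable order $(t_1,\ldots,t_{n/2})$.

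For nonzeroness, I induct on $k\in\{0,1,\ldots,n/2\}$, where $f_k$ denotes the polynomial obtained by applying $\phi$ to only the first $k$ pairs, so $f_0=f$ and $f_{n/2}=f(\phi(\x))$. Assuming $f_k \neq 0$, I write
\[
f_k \;=\; \bar u\cdot D_{2k+1}(x_{2k+1})\,D_{2k+2}(x_{2k+2})\cdot \bar v,
\]
where $\bar u := E_1'(t_1)\cdots E_k'(t_k) \in \F[t_1,\ldots,t_k]^{1\times w}$ and $\bar v := D_{2k+3}(x_{2k+3})\cdots D_n(x_n) \in \F[x_{2k+3},\ldots,x_n]^{w\times 1}$ (with the obvious conventions when $k=0$ or $k=n/2-1$). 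Bracketing as $f_k = (\bar u\, D_{2k+1}(x_{2k+1}))\cdot(D_{2k+2}(x_{2k+2})\,\bar v)$ and viewing $f_k$ as a bivariate polynomial in $(x_{2k+1},x_{2k+2})$ over the fraction field $\F' := \F(t_1,\ldots,t_k,x_{2k+3},\ldots,x_n)$, one sees that $f_k = \sum_{r=1}^{w} g_r(x_{2k+1})\,h_r(x_{2k+2})$ with $g_r \in \F'[x_{2k+1}]$ and $h_r \in \F'[x_{2k+2}]$, each of degree at most $d$. Since $\F'$ has the same characteristic as $\F$ and $f_k$ is nonzero in the parent polynomial ring (hence in $\F'[x_{2k+1},x_{2k+2}]$), Lemma~\ref{lem:bivariatehs} applies and yields
\[
f_k\bigl(\ldots,\, t_{k+1}^w,\, t_{k+1}^w+t_{k+1}^{w-1},\, \ldots\bigr) \neq 0
\]
in $\F'[t_{k+1}]$, which is exactly $f_{k+1}\neq 0$.

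The only delicate point I anticipate is the passage to the fraction field $\F'$: one must check that after this change of coefficient ring, $f_k$ still meets the hypotheses of Lemma~\ref{lem:bivariatehs}. This reduces to three routine observations -- the width-$w$ rank decomposition $\sum_r g_r h_r$ survives because $\bar u$ and $\bar v$ do not involve $x_{2k+1}$ or $x_{2k+2}$; the individual degrees in $x_{2k+1}, x_{2k+2}$ remain at most $d$ since the earlier substitutions only touched other variables; and nonzeroness lifts from the polynomial ring to its fraction-field extension. The characteristic hypothesis $\cha(\F)=0$ or $\cha(\F)>d$ carries over to $\F'$ unchanged since $\F'$ is a purely transcendental extension of $\F$, so the same bound on $\cha$ suffices for all $n/2$ inductive steps.
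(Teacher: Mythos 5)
Your proof is correct, and it shares the same inductive skeleton as the paper's proof: define $f_k$ by substituting the first $k$ pairs, and show $f_k \neq 0 \Rightarrow f_{k+1}\neq 0$ by reducing to the bivariate case and invoking Lemma~\ref{lem:bivariatehs}. The \emph{moreover} part is handled identically, via $E_i'(t_i) = D_{2i-1}(t_i^w)D_{2i}(t_i^w+t_i^{w-1})$.

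The one genuine difference is in how you reduce to the bivariate lemma. The paper fixes the remaining variables $(t_1,\ldots,t_{i-1},x_{2i+1},\ldots,x_n)$ to a constant tuple $\alpha \in \F^{n-i-1}$ chosen so that $f_{i-1}$ stays nonzero after the substitution, and then applies Lemma~\ref{lem:bivariatehs} over $\F$ itself. You instead keep those variables symbolic and pass to the purely transcendental extension $\F' = \F(t_1,\ldots,t_k,x_{2k+3},\ldots,x_n)$, regarding $f_k$ as a bivariate polynomial in $(x_{2k+1},x_{2k+2})$ over $\F'$ with the width-$w$ decomposition $\bar u\, D_{2k+1}D_{2k+2}\,\bar v$; you then apply Lemma~\ref{lem:bivariatehs} over $\F'$. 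Both routes work and use the same hypotheses. Your route is arguably cleaner: it avoids the implicit need for $\F$ to contain a good point $\alpha$ (the degrees of $f_{i-1}$ in the $t_j$'s can reach $2dw$, so the paper's existence argument quietly assumes $\F$ is large enough or passes to an extension), whereas the fraction-field argument sidesteps the cardinality issue entirely while preserving the characteristic, which is all that Lemma~\ref{lem:bivariatehs} needs. Your closing observations, that the degree in $(x_{2k+1},x_{2k+2})$ is untouched by earlier substitutions and that nonzeroness over $\F'[t_{k+1}]$ descends to the polynomial ring over $\F$, are correct and complete the argument.
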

\begin{proof}
Let us apply the map in $n/2$ rounds, \ie, 
define a sequence of polynomials $(f= f_0, f_1, \dots, f_{n/2}=f(\phi(\x)))$ 
such that the polynomial $f_i$ is obtained by replacing 
$(x_{2i-1},x_{2i})$ with $(\phi(x_{2i-1}), \phi(x_{2i}))$ in $f_{i-1}$
for each $1 \leq i \leq n/2$.
We will show that for each $1 \leq i \leq n/2$, 
if $f_{i-1} \neq 0$ then $f_i \neq 0$. 
Clearly this proves the first part of the lemma. 

Note that $f_{i-1}$ is a polynomial over variables $\{t_1,\dots, t_{i-1}, x_{2i-1}, \dots, x_n\}$. 
As $f_{i-1} \neq 0$, 
there exists a constant tuple $\alpha \in \F^{n-i-1}$
such that after replacing the variables $(t_1, \dots, t_{i-1},$ $ x_{2i+1}, \dots, x_{n})$
with $\alpha$, 
the polynomial %
$f_{i-1}$ remains nonzero.
After this replacement we get a polynomial $\hat{f}_{i-1}$ in the variables $(x_{2i-1},x_{2i})$.
As $f$ is computed by the ROABP $D_1D_2 \cdots D_n$,
the polynomial $\hat{f}_{i-1}$ can be written as $A^{\transpose} D_{2i-1}(x_{2i-1}) D_{2i}(x_{2i}) B$
for some $A, B \in \F^{w \times 1}$.
In other words, $\hat{f}_{i-1}$ has a bivariate ROABP of width-$w$. 
Thus, $\hat{f}_{i-1}(\phi(x_{2i-1}), \phi(x_{2i}) )$ is nonzero from \expref{Lemma}{lem:bivariatehs}. 
But, $\hat{f}_{i-1}(\phi(x_{2i-1}), \phi(x_{2i}) )$
is nothing but the polynomial obtained after 
replacing the variables $(t_1, \dots, t_{i-1}, x_{2i+1}, \dots, x_{n})$ in $f_i$
with $\alpha$. 
Thus, $f_i$ is nonzero. This finishes the proof.

Now, we argue that $f(\phi(\x))$ has a width-$w$ ROABP\@.
Let $\tilde{D}_i := D_{2i-1}(t_i^w) D_{2i}(t_i^w + t_i^{w-1})$
for all $1 \leq i \leq n/2$.
Clearly, $\tilde{D}_1 \tilde{D}_2 \cdots \tilde{D}_{n/2}$ is a width-$w$ ROABP computing $f(\phi(\x))$ in variable order
$(t_1,t_2, \dots, t_{n/2})$,
as $\tilde{D}_1 \in \F^{1 \times w}[t_1]$,
$\tilde{D}_{n/2} \in \F^{w \times 1}[t_{n/2}]$ and $\tilde{D}_i \in \F^{w \times w}[t_i]$
for all $2 \leq i \leq n/2-1$. 
\end{proof}

By applying the map $\phi$ in \expref{Lemma}{lem:halving},
we reduced an $n$-variate ROABP to an $(n/2)$-variate ROABP,
while preserving the non-zeroness. The resulting ROABP has
same width-$w$, but the individual degree goes up to become $2dw$, 
where $d$ is the original individual degree. 
As our map $\phi$ is degree insensitive, we can apply a similar map again
on the variables $\{t_i\}_{i=1}^{n/2}$.
That is, for $1\leq i \leq n/4$, define $\phi(t_{2i-1}) = s_i^{w}$ and $\phi(t_{2i}) = s_i^{w} + s_i^{w-1}$
for variables $\{s_1,s_2,\dots, s_{n/4}\}$.
Now, we get an $(n/4)$-variate ROABP 
of
individual degree $4dw^2$.
It is easy to see that when the map $\phi$ is repeatedly applied
in this way
$\log n$ times, we get a nonzero univariate polynomial of degree $ndw^{\log n}$.
Next lemma puts it formally. 
For ease of notation, we use the variable numbering from $0$ to $n-1$. 
Let $p_0(t) = t^w$ and $p_1(t)=t^w+t^{w-1}$.

\begin{lemma}
\label{lem:recursivehs}
Suppose $\cha(\F)=0$, or $\cha(\F) \geq ndw^{\log n}$.
Let $f \in \F[\x]$ be a nonzero polynomial 
of
individual degree $d$ and computed by a width-$w$
ROABP in variable order $(x_0,x_1, \dots, x_{n-1})$.
Let the map $\phi \colon \{x_0 ,x_1, \dots, x_{n-1}\} \to \F[t]$ be such that
for any index $0 \leq i \leq n-1$,
$$ \phi (x_i) = p_{i_1}( p_{i_2} \cdots (p_{i_{\log n}} (t) ) )\,,$$
where $i_{\log n} \; i_{\log n -1} \; \cdots \; i_{1}$ is the binary representation of $i$.

Then $f(\phi(\x))$ is a nonzero univariate polynomial 
of
degree $ndw^{\log n}$.
\end{lemma}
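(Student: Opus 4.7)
The plan is to derive this lemma by iterating Lemma~\ref{lem:halving} exactly $\log n$ times, where each iteration halves the number of variables while preserving non-zeroness and width. First I would verify that the explicit map $\phi$ in the statement is precisely the composition of $\log n$ halving maps. Writing $i$ in binary as $i_{\log n} \cdots i_2 i_1$, the least significant bit $i_1$ selects whether $x_i$ maps to $p_0$ or $p_1$ applied to a new variable $t_{\lfloor i/2 \rfloor}$; this is exactly the first halving round of Lemma~\ref{lem:halving}, and recursing on the bits $i_2, i_3, \ldots, i_{\log n}$ yields the explicit nested form in the statement.

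Next I would set up the induction. Let $f^{(0)} := f$ and, for $k = 1, \ldots, \log n$, let $f^{(k)}$ be the polynomial obtained from $f^{(k-1)}$ by applying the halving map of Lemma~\ref{lem:halving} on its $n/2^{k-1}$ variables, introducing fresh variables indexed by $0,\ldots, n/2^k - 1$. The inductive claim is that $f^{(k)}$ is a nonzero polynomial in $n/2^k$ variables computed by a width-$w$ ROABP with individual degree at most $(2w)^k d$. The width and ROABP form are preserved directly by the ``moreover'' part of Lemma~\ref{lem:halving}; non-zeroness is preserved by its first conclusion; and the individual-degree bound follows because each application of $\phi$ replaces a variable of degree $D$ (in two consecutive ROABP layers) by substitutions $t^w$ and $t^w + t^{w-1}$ of degree $w$, so the degree in each new variable is bounded by $2wD$.

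Then I would check the characteristic hypothesis of Lemma~\ref{lem:halving} at each round. At round $k$, I apply the lemma to a polynomial of individual degree at most $(2w)^{k-1} d$, requiring $\cha(\F) = 0$ or $\cha(\F) > (2w)^{k-1} d$. The worst case is $k = \log n$, giving the requirement $\cha(\F) > (2w)^{\log n - 1} d = (n/2)\, w^{\log n - 1} d$, which is strictly less than $ndw^{\log n}$ (for $w \geq 1$), so the hypothesis $\cha(\F) \geq ndw^{\log n}$ is more than enough for all $\log n$ rounds. After $\log n$ rounds, $f^{(\log n)}$ is a nonzero univariate polynomial of individual degree at most $(2w)^{\log n} d = ndw^{\log n}$, which coincides with $f(\phi(\x))$ by the identification in the first paragraph.

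The steps are all routine; the only potential pitfall is the bookkeeping in matching the binary-representation formula for $\phi$ with the iterated halving and in keeping the degree bound tight enough that a single clean hypothesis on $\cha(\F)$ covers every round. No step is genuinely delicate beyond this, since all the analytic content — that the bivariate substitution preserves non-zeroness and that the halving preserves the ROABP structure — has already been isolated in Lemmas~\ref{lem:bivariatehs} and~\ref{lem:halving}.
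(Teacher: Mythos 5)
Your proposal is correct and follows essentially the same route the paper intends: iterate Lemma~\ref{lem:halving} exactly $\log n$ times, using the ``moreover'' clause to preserve width-$w$ ROABP structure, the non-zeroness conclusion for each round, and the degree bound $(2w)^k d$ to verify that the single characteristic hypothesis $\cha(\F) \geq ndw^{\log n}$ covers all rounds. The paper states this argument only informally before Lemma~\ref{lem:recursivehs}; your write-up fills in the same bookkeeping (the binary-index-to-nested-$p$ correspondence and the per-round degree growth) that the paper leaves to the reader.
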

Note that the map $\phi$ crucially uses the knowledge of the variable order.	
In the last round when we are going from
two variables to one, the individual degree is $ndw^{\log n -1}$ and
 \expref{Lemma}{lem:bivariatehs} requires $\cha(\F)$ to be higher than the individual degree.
Thus, having $\cha(\F) \geq ndw^{\log n}$ suffices.
Hence, we get the following theorem.

\begin{theorem}
\label{thm:knownROABPhs}
Let $\C$ be the class of $n$-variate, 
individual-degree-$d$  %
polynomials computed by width-$w$ ROABPs.
Then a hitting set for $\C$ of size $O(ndw^{\log n})$ can be constructed,
when the variable order is known and the 
field characteristic is zero or at least $ndw^{\log n}$.
\end{theorem}
\begin{proof}
Let $f(\x)$ be a polynomial in class $\C$. 
{}From \expref{Lemma}{lem:recursivehs}, $f(\phi(\x)) \in \F[t]$ is a nonzero univariate polynomial 
of
degree
$ndw^{\log n}$.
Thus, if we substitute $1+ndw^{\log n}$ field values for the variable $t$,
one of them will keep $f(\phi(\x))$ nonzero. 
\end{proof}

{}From this, we immediately get the following result for constant-width ROABPs.
Note that when $w$ is constant, the lower bound on the characteristic also becomes $\poly(n)$.
\begin{corollary}
For the class of $n$-variate, 
individual-degree-$d$  
polynomials computed by
 constant width ROABPs (known variable order), 
a $\poly(n,d)$-size hitting set can be constructed,
when the field characteristic is zero (or larger than $poly(n,d)$).
\label{cor:constantwidth}
\end{corollary}

As mentioned earlier, our approach can potentially lead to a
polynomial-size 
hitting set for ROABPs. 
We make the following conjecture for which we hope to get a proof
on the lines of \expref{Lemma}{lem:bivariatehs}.

\begin{conjecture}
\label{con:polyROABP}
Suppose $\cha(\F)=0$.
Let $f(\x) \in \F[\x]$ be an $n$-variate, degree-$d$ polynomial computed by a width-$w$ ROABP\@.
Then $f(t^r,(t+1)^{r}, \dots, (t+n-1)^r) \neq 0$ for some $r$ bounded by $\poly(n,w,d)$.
\end{conjecture}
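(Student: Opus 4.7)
The plan is to extend the partial derivative matrix argument of Lemma~\ref{lem:bivariatehs} to the $n$-variate setting. Under the map $\phi_r(x_i) = (t+i)^r$, a monomial $\xa$ maps to $\prod_i (t+i)^{r a_i}$, a polynomial in $t$ of degree exactly $r(a_0 + \dots + a_{n-1})$ with leading coefficient~$1$. For $r$ larger than a polynomial in $n,d$, the top-$r$ window of $t$-powers contributed by monomials of total degree $\ell$ does not overlap with those contributed by monomials of other total degrees. Hence it suffices to show that for each nonzero homogeneous component $f_\ell$ of $f$, the image $f_\ell(\phi_r(\x))$ is nonzero.

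Fix such a homogeneous component $f_\ell$. A direct expansion gives
\[
[t^{r\ell - k}]\, f_\ell(\phi_r(\x)) \;=\; \sum_{\a :\, a_0 + \dots + a_{n-1} = \ell} \coeff_f(\xa) \cdot P_{k,\a}(r),
\]
where $P_{k,\a}(r) = \sum_{k_0 + \dots + k_{n-1} = k} \prod_i \binom{r a_i}{k_i} i^{k_i}$. The task becomes: for some $k$ in a polynomial range, this inner product is nonzero. This is where the ROABP structure enters. Lemma~\ref{lem:rankwidth} generalizes to show that the coefficient vector $(\coeff_f(\xa))_\a$, viewed as a matrix under any prefix cut of the ROABP's variable order, has rank at most $w$. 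Combined with a Vandermonde-like argument on the evaluation maps $\a \mapsto P_{k,\a}(r)$ --- parallel to Claim~\ref{cla:independentRows} --- this should force the inner product to be nonzero for some small $k$.

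The main obstacle is a mismatch: the ROABP rank constraints apply only to prefix cuts in the ROABP's own (unknown) variable order, while the substitution $(t+i)^r$ is blind to that order. In the bivariate case, a single diagonal of $M_f$ with at most $w$ nonzero entries plus a clean $w \times w$ Vandermonde sufficed; for $n$ variables one must exhibit a set of at most $\mathrm{poly}(n,w,d)$ witness monomials on the top homogeneous piece, together with a non-singular evaluation matrix certifying that their coefficients cannot all cancel under any width-$w$ ROABP. Constructing such a structured evaluation matrix using only $\mathrm{poly}(n,w,d)$ choices of $k$ --- without assuming knowledge of the variable order --- is, in my view, the decisive hurdle, and likely needs a new structural insight into the top-homogeneous part of width-$w$ ROABPs.
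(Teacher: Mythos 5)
This statement is \emph{Conjecture}~\ref{con:polyROABP} in the paper; it is left open, with the authors only remarking that they ``hope to get a proof on the lines of Lemma~\ref{lem:bivariatehs}.'' There is therefore no paper proof to compare against, and your write-up does not claim a proof either --- you explicitly flag the decisive hurdle at the end. What you have produced is a program sketch together with an accurate diagnosis of why the bivariate argument does not extend straightforwardly, which is consistent with the status of the statement in the paper.

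Two remarks on the sketch. First, the reduction to homogeneous components is phrased too strongly: the window of $t$-powers $(r(\ell-1), r\ell]$ is free of contributions from monomials of total degree $< \ell$, but monomials of degree $> \ell$ do populate it (their images run from $t^{r(\ell+1)}$ all the way down to $t^0$), so the separation argument only cleanly applies to the \emph{top} nonzero homogeneous piece $f_{\ell_{\max}}$; this is exactly what the bivariate proof exploits via the highest diagonal, and it is the right target. Your coefficient formula for $[t^{r\ell-k}]\,f_\ell(\phi_r(\x))$ is correct. Second, the gap you name is the real one: Nisan's rank bound constrains the coefficient tensor only along prefix cuts of the ROABP's \emph{own, unknown} variable order, while $(t+i)^r$ is symmetric in the coordinates. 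In the bivariate case there is a single cut and a single diagonal of at most $w$ surviving entries, and the $w\times w$ binomial Vandermonde of Claim~\ref{cla:independentRows} closes the argument. For $n$ variables the top slice $\{\a : \sum_i a_i = \ell_{\max}\}$ can be large and the order-blind maps $\a \mapsto P_{k,\a}(r)$ give no apparent small witness set forcing a nonzero inner product for some $k$ in a polynomial range. Resolving this appears to need a permutation-invariant structural statement about top-degree coefficients of width-$w$ ROABPs --- precisely why the paper leaves it as a conjecture rather than a theorem.
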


\section{Any-order ROABP} 
\label{sec:any-order-ROABP} %
In this section, we give better hitting sets for
the class of polynomials computed by ROABPs in any order. %
Recall that a polynomial $f(\x) \in \F[\x]$ is said to be computed by 
width-$w$ ROABPs in any order %
if for any permutation 
$\sigma \colon [n] \to [n]$, 
$f(\x)$ can be written as $$A^\transpose D_1(x_{\sigma(1)}) D_2(x_{\sigma(2)}) \cdots D_n(x_{\sigma(n)}) B$$
for some 
$D_i \in \F^{w \times w}[x_{\sigma(i)}]$ for $ 1 \leq i \leq n$ and 
$A, B \in \F^{w \times 1}$.

We will also consider ROABPs which compute a polynomial over the matrix algebra, that is,
polynomials whose coefficients are matrices. 
$D(\x) \in \F^{w \times w}[\x]$ is said to be computed by a width-$w$ ROABP
if $D(\x) =  D_1D_2 \cdots D_n $
for some polynomials $D_i \in \F^{w \times w}[x_{\sigma(i)}]$ for $1 \leq i \leq n$.
Forbes et al.~\cite{FSS14}
gave a hitting set of size $d^{O(\log w)}(nw)^{O(\log \log w)}$
for $n$-variate, individual-degree-$d$ polynomials computed by width-$w$ ROABPs in any order.
Note that when $d$ is small, this hitting-set size
is much better than that for 
ROABPs (with a particular variable order), \ie, $(ndw)^{O(\log n)}$ \cite{AGKS15}.
However when $d$ is $\Omega(n)$, the size is comparable to the latter case.
We improve the hitting-set size for 
polynomials computed by ROABPs in any order
 to $(ndw)^{O(\log \log w)}$. 
This is significantly better than the 
case of polynomials computed by ROABPs in a particular variable order.

\subsection{Rank-concentration}
Forbes et al.~\cite{FSS14} constructed the hitting set using the notion of
rank-concentration defined by Agrawal et al.~\cite{ASS13}.
Recall that $D(\x)$ is a polynomial over an algebra if its coefficients come from the algebra.

\begin{definition}[\cite{ASS13}]
A polynomial $D(\x)$ over an algebra is said to be $\ell$-concentrated
if its coefficients of $(<\ell)$-support monomials span all its coefficients.
That is, for all $\a \in \Wh^n$
\begin{equation}  \coeff_{D}(\x^{\a}) \in \Span\{\coeff_{D}(\xb) \mid \b \in \Wh^n, \; \supp(\b)< \ell \}\,.
\label{eq:lowspan}
\end{equation}
\end{definition}

Note that for a nonzero polynomial over a field, $\ell$-concentration simply means that 
one of its monomials of support $<\ell$ has a nonzero coefficient.
As we will see later, it is easy to
construct hitting sets for a polynomial which has low-support concentration. 
However, not every polynomial has a low-support concentration, for example, consider the following polynomial over a field:
$f(\x) = x_1x_2 \ldots x_n$.
Agrawal et al.~\cite{ASS13} observed that concentration can be achieved by a shift of variables,
\eg, $f(\x + \boldsymbol{1}) = (x_1+1)(x_2+1) \cdots (x_n+1)$ has $1$-concentration.
For a polynomial $f(\x)$, shift by a tuple $\s = (s_1,s_2, \dots, s_n)$
would mean $f(\x +\s) = f(x_1+s_1,x_2+s_2, \dots, x_n+s_n)$.

To achieve concentration, it is often useful to consider shifts which are polynomials. 
In particular, we will be considering shifts by bivariate polynomials,
\ie,  $\s(t_1,t_2) \in \F[t_1,t_2]^n$.
As ultimately we are interested in hitting sets, 
the variables $t_1$ and $t_2$ can later be replaced by field values. 
The size of the hitting set, in this case, will be multiplied by $\delta^2$, 
where $\delta$ is the maximum degree of any $s_i(t_1,t_2)$.
Thus, for a bivariate shift $\s(t_1,t_2)$, its degree will be viewed as the complexity measure. 
Note that for a polynomial $D(\x) \in \F^{w \times w}[\x]$, 
the coefficient of a monomial $\x^\a$ in $D(\x+\s(t_1,t_2))$ will be from $\F[t_1,t_2]^{w \times w}$.
So, when we talk of low-support concentration in $D(\x+\s(t_1,t_2))$, 
the span in~\eqref{eq:lowspan} is taken over the field $\F(t_1,t_2)$.

Forbes et al.~\cite{FSS14} construct the hitting set for 
polynomials computed by ROABPs in any order %
in two steps.
Let $f(\x)$ be an $n$-variate 
individual-degree-$d$  %
polynomial computed by width-$w$ 
ROABPs in any order. %
Their first step is to construct a tuple $\s(t_1,t_2)$ of bivariate polynomials
of
degree $\poly(n) d^{O(\log w)}$
such that $f(\x+\s)$ has
$O(\log w)$-concentration. 
We improve this step by constructing a new tuple $\s(t_1,t_2)$
of
degree
$(ndw)^{O(\log \log w)}$, which has the same property.

We follow the second step of Forbes et al.~\cite{FSS14} as it is. 
It is easy to see that $f(\x+\s)$ can also be computed by width-$w$ 
ROABPs in any order %
(over the field $\F(t_1,t_2)$).
They show that
if a given 
polynomial, computed by ROABPs in any order,
is $\ell$-concentrated
then there is a hitting set for it of size $(ndw)^{O(\log \ell)}$.
This implies a hitting set $\Hit$ of size $(ndw)^{O(\log \log w)}$ for $f(\x +\s)$.
Clearly, the set $\{\h + \s \mid \h \in \Hit \}$ is a hitting set for $f(\x)$.
One can obtain a hitting set in $\F^n$ by replacing $t_1$ and $t_2$ with sufficiently many
field values. 
By Schwartz-Zippel-DeMillo-Lipton Lemma, it will suffice to take more than $\deg_{t_1,t_2}(f(\h+\s)) = \deg(f) \cdot \deg(\s)$ values.
Thus, the final hitting-set size becomes 
$ \deg(\s) \cdot (ndw)^{O(\log \log w)}$.
With our improved bound on $\deg(\s)$, we get a hitting set of the desired size.

Now, we elaborate the first step of Forbes et al.~\cite{FSS14}, \ie, the construction of 
the shift $\s(t_1,t_2)$.
To achieve concentration they use the idea of Agrawal, Saha and Saxena~\cite{ASS13},
\ie, achieving concentration in small sub-ROABPs implies concentration in the
given ROABP\@.
For the sake of completeness, we rewrite the lemma using the terminology of this paper.
We first clarify a notation which will be used often:
for an $n$-tuple $\s$ and a polynomial $D(\x)$ which only depends variables 
$(x_{i_1},x_{i_2}, \dots, x_{i_\ell})$,
the expression %
 $D(\x+\s)$ will denote 
$D(x_{i_1}+s_{i_1},x_{i_2}+s_{i_2}, \dots, x_{i_\ell}+s_{i_\ell})$.

\begin{lemma}[\cite{ASS13,FSS14}]
\label{lem:ellton}
Let $\ell< n$ be any number.
Let $\s $ be the $n$-tuple  such that for any 
distinct $ i_1,i_2,\dots, i_\ell \in [n]$ and
individual-degree-$d$ %
polynomial
\[
  D(\x) = D_1(x_{i_1})D_2(x_{i_2}) \cdots D_\ell(x_{i_\ell})
\]
over the matrix algebra $\F^{w \times w}$, 
$D(\x+\s)$ is $\ell$-concentrated. 
Then for any 
individual-degree-$d$  %
polynomial $f(\x) \in \F[\x]$ computed by width-$w$ 
ROABPs in any order, %
$f(\x +\s)$ is $\ell$-concentrated.
\end{lemma}
\begin{proof}
Let $f'(\x) = f(\x+\s)$.
Consider any monomial $\xa$ with support $\geq \ell$. We will show
that its coefficient in $f'(\x)$ is in the span of smaller support coefficients 
in $f'(\x)$. 
Let $S=\{x_{i_1}, x_{i_2}, \dots, x_{i_\ell}\}$ be a set of $\ell$ variables contained in 
the support of monomial $\xa$. 
Let $\overline{S} = \{x_{i_{\ell+1}},\dots, x_{i_n} \}$ be the rest of the variables.
Let us write $\xa = \xb \xc$ with $\Supp(\b) = S$ and $\Supp(\c) \subseteq \overline{S}$.
Since, $f(x)$ is computed by 
ROABPs in any order, %
it has an ROABP in the variable order
$(x_{i_1}, \dots, x_{i_\ell},x_{i_{\ell+1}},\dots, x_{i_n})$. 
That is, 
$$f(\x) = A^\transpose D_1(x_{i_1})\cdots D_\ell(x_{i_\ell}) D_{\ell+1}(x_{i_{\ell+1}}) \dots D_n(x_{i_n}) B$$
for some $D_j \in \F^{w\times w}[x_{i_j}]$ for $1 \leq j \leq n$ and $A, B \in \F^{w \times 1}$. 
Let
\[
  D(\x) :=  D_1(x_{i_1})\cdots D_\ell(x_{i_\ell})\qquad\text{and}\qquad E(\x) := D_{\ell+1}(x_{i_{\ell+1}}) \dots D_n(x_{i_n})\,.
\]
Let $D'(\x) = D(\x+\s)$ and $E'(\x) = E(\x+\s)$. 
Clearly $f'(\x) = A^\transpose D'(\x)E'(\x) B$. 
By the lemma hypothesis, $D'(\x)$ is $\ell$-concentrated.
That is, 
\begin{equation}
\label{eq:Eell}
\coeff_{D'}(\xb) \in \Span\{\coeff_{D'}(\x^{\b'}) \mid \Supp(\b') \subseteq S, \; \supp(\b')< \ell \}\,.
\end{equation}
Note that we have $\Supp(\b') \subseteq S$ because each monomial in $D'(\x)$ comes from set $S$.
It is easy to see that for any monomial $\x^{\b'}$ with $\Supp(\b') \subseteq S$
$$\coeff_{f'}(\x^{\b'} \xc) = A^\transpose \coeff_{D'}(\x^{\b'}) \coeff_{E'} (\xc) B\,.$$
Thus, by left multiplying $A^\transpose$ and right multiplying $\coeff_{D'} (\xc) B$
 in~\eqref{eq:Eell}, we get
\[
\coeff_{f'}(\xa) \in \Span\{\coeff_{f'}(\x^{\b'}\xc) \mid \Supp(\b') \subseteq S, \; \supp(\b')< \ell \}\,.
\]
Note that $\supp(\b')+\supp(\c) < \supp(\b) + \supp(\c) = \supp(\a)$.
So, we can write
\[
\coeff_{f'}(\xa) \in \Span\{\coeff_{f'}(\x^{\a'}) \mid  \supp(\a')< \supp(\a) \}\,.
\]
In other words, for any monomial $\xa$ with $\supp(\a)\geq \ell$, 
$\coeff_{f'}(\xa)$ is in the span of coefficients of support smaller than $\supp(\a)$.
This would mean that, in fact, all coefficients of $f'(\x)$ are in the span of coefficients
with support $< \ell$.
\end{proof}

Now, for some $\ell \leq n$, the goal is to construct an $n$-tuple $\s$
such that for any distinct $ i_1,i_2,\dots, i_\ell \in n $, shifting by $\s$
ensures $\ell$-concentration in any $\ell$-variate ROABP of the form
$D(\x)=D_{1} (x_{i_1}) D_{2} (x_{i_2}) \cdots D_{\ell} (x_{i_\ell})$. 
Note that \expref{Lemma}{lem:ellton} holds for any value of $\ell \leq n$. 
However, one cannot choose $\ell$ to be arbitrary small.
The reason is that for an $\ell$-variate polynomial over a $k$-dimensional algebra,
one can hope to achieve $\ell$-concentration only when $\ell \geq \log (k+1)$.
To see this, consider the polynomial $D(\x) = \prod_{i=1}^{\ell} (1 + v_i x_i)$ 
over the algebra of $k\times k$ diagonal matrices, with $k = 2^{\ell}$. 
Here, $1$ stands for the matrix $\diag(1,1,\dots,1)$.
Define $v_1 = \diag(\alpha_1,\alpha_2,\dots, \alpha_k)$ for some distinct $\alpha_i$s. 
And define $v_i = v_1^{2^{i-1}}$ for $2 \leq i \leq \ell$.
It is not hard to see that the $2^{\ell}$ coefficients 
of the polynomial $D$ are $\{1,v_1,v_1^2,\dots, v_1^{2^{\ell}-1}\}$,
which are linearly independent. 
Note that since shifting is an invertible operation, the $2^\ell$ coefficients of 
$D(\x+\s)$ will also be linearly independent for any $\s$.
But, there are only $2^\ell -1$ monomials with support $< \ell$.
Hence, the coefficients of $(<\ell)$-support monomials 
cannot span all the coefficients in $D(\x+\s)$, for any shift $\s$.

Note that the dimension of the algebra $\F^{w \times w}$ is bounded by $w^2$.
To reiterate the goal, given $n$ and $w$,
we fix $\ell = \ceil{\log(w^2+1)}$ and we want to achieve $\ell$-concentration in 
all polynomials computed by an ROABP of the form $D_{1}  D_{2}  \cdots D_{\ell} $
where $D_j \in \F^{w \times w}[x_{i_j}]$ for $1 \leq j \leq \ell$,  
for some distinct $i_1,i_2,\dots,i_\ell \in [n]$.
As now we are dealing with polynomials in a small number of variables, 
it should be easier to achieve the concentration. 

Towards this goal, Forbes et al.~\cite{FSS14} give a bit more general result.
For any $\ell \geq \log(w^2+1)$, 
they construct a tuple $\s \in \F[t_1,t_2]^n$ of degree $\poly(n) d^{O(\ell)}$
which has the following property:
for any polynomial $D(\x) \in \F^{w \times w}[\x]$ which uses
at most $\ell$ of the $n$ variables and has 
individual-degree bound $d$, %
$D(\x+\s)$ has $\ell$-concentration. 
Here, Forbes et al.~\cite{FSS14} do not need that $D(\x)$ is computed by an ROABP\@.

We, on the other hand, use the property that $D(\x)$ is computed by a width-$w$,
 $\ell$-variate ROABP
and reduce
the degree of $\s(t_1,t_2)$ to $(n d w)^{O(\log \ell)}$.
Our construction of $\s(t_1,t_2)$ comes from the basis isolating weight assignment for ROABPs
from Agrawal et al.~\cite{AGKS15}.
We use the fact that for any polynomial over a $k$-dimensional algebra,
 shift by a basis isolating map achieves $\log(k+1)$-concentration~\cite{GKST15}.
\subsection{Basis isolation}
Let us first recall the definition of a basis isolating weight assignment. 
Let $\M$ denote the set of all monomials over the variable set ${\x}$
with 
individual-degree  %
$\leq d$.
Any function $\w \colon \x \to \Wh$ can be naturally extended to the set of all monomials 
as follows: $\w(\prod_{i=1}^n x_i^{\gamma_i}) = \sum_{i=1}^n \gamma_i \w(x_i)$, 
for any $(\gamma_i)_{i=1}^n \in \Wh^n$.
Note that if the variable $x_i$ is replaced with $t^{\w(x_i)}$ for each $i$,
then any monomial $m$ just becomes $t^{\w(m)}$.
Let $\A_k$ denote a $k$-dimensional algebra.

\begin{definition}[\cite{AGKS15}]
A weight function $\w \colon \x \to \Wh$ is called a basis isolating
weight assignment for a polynomial $D(\x) \in \A_k[\x]$, 
if there exists 
a set of monomials $S \subseteq \M$ ($ \abs{S} \leq k$)
whose coefficients form a basis for the coefficient space of $D({\x})$, 
such that 
\begin{itemize}
\item[--] for any $m, m' \in S$, $\w(m) \neq \w(m')$ and
\item[--] for any monomial $m \in \M \setminus S$, 
$$ \coeff_D(m) \in \Span \{ \coeff_D(m') \mid m' \in S, \; \w(m') < \w(m) \}\,.$$
\end{itemize}
\end{definition}

Gurjar et al.~\cite[Lemma 5.2]{GKST15} have shown that shifting by 
a basis isolating weight assignment achieves concentration.
We write their lemma here without a proof. 
For a weight function $\w \colon \x \to \Wh$, let $t^\w$ denote the 
tuple $(t^{\w(x_1)}, t^{\w(x_2)}, \dots, t^{\w(x_n)})$.

\begin{lemma}[Isolation to concentration]
Let $D(\x)$ be a polynomial over a $k$-dimensional algebra.
Let~$\w$ be a basis isolating weight assignment for~$D(\x)$.
Then $D(\x + t^\w)$ is $\ell$-concentrated (over $\F(t)$), where $\ell = \ceil{\log(k+1)}$.
\label{lem:lconc}
\end{lemma}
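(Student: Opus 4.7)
The plan is to show $B(\x) := A(\x + t^\w)$ is $\ell$-concentrated over $\F(t)$. Since $\x \mapsto \x + t^\w$ is an invertible $\F(t)$-algebra change of variables, the $\F(t)$-span of the coefficients of $B$ equals $V \otimes_\F \F(t)$, where $V := \Span_\F\{c_1,\dots,c_{k'}\}$ is the coefficient space of $A$ and $c_i := \coeff_A(m_i)$ is the basis singled out by $\w$. Ordering so that $\w(m_1) < \cdots < \w(m_{k'})$ and noting $k' \le k < 2^\ell$, it suffices to exhibit $k'$ monomials $\x^{\a_1},\dots,\x^{\a_{k'}}$ of support less than $\ell$ whose $B$-coefficients span this space.

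Expanding $(\x+t^\w)^\b$ termwise and substituting the basis-isolating decomposition of each $\coeff_A(\x^\b)$ into $c_1,\ldots,c_{k'}$, I obtain
$$t^{\w(\x^\a)}\,\coeff_B(\x^\a) \;=\; \sum_{j=1}^{k'} q_{\a,j}(t)\,c_j, \qquad q_{\a,j}\in\F[t],$$
where a short bookkeeping shows the leading term of $q_{\a,j}(t)$ is $\binom{m_j}{\a}\,t^{\w(m_j)}$ when $m_j \ge \a$ componentwise, and has strictly higher $t$-order otherwise (the hypothesis on $\cha(\F)$ keeps $\binom{m_j}{\a}$ nonzero in $\F$). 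Normalising $\tilde Q(t)_{i,j} := t^{-\w(m_j)}q_{\a_i,j}(t) \in \F[t]$ gives $\tilde Q(0) = N$ with $N_{i,j} := \binom{m_j}{\a_i}$. So the problem reduces to choosing $(<\ell)$-support $\a_1,\dots,\a_{k'}$ such that $N$ is nonsingular over $\F$, since then $\det Q(t) = \prod_j t^{\w(m_j)}\det\tilde Q(t) \ne 0$ over $\F(t)$, which is the desired $\ell$-concentration.

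The main obstacle is this combinatorial selection, since the naive triangular choice ($m_i \ge \a_i$, $m_j \not\ge \a_i$ for $j < i$) is not always possible. Instead I would prove the stronger fact that the full incidence matrix $\big(\binom{m_j}{\a}\big)_{|\supp(\a)|<\ell,\, j}$ has column rank $k'$, so any maximal linearly independent row-selection supplies the $\a_i$'s. This reduces to the following dual statement (stated in the multilinear case; the general case follows from the same Hasse-derivative analysis): no nonzero linear combination $\sum_j \lambda_j y^{m_j}$ of fewer than $2^\ell$ distinct multilinear monomials in $y_1,\dots,y_n$ can vanish at $y = (1,\dots,1)$ to order $\ge \ell$. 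I would prove this directly: if $\ell$ is the order of vanishing and $\a^*$ is an exponent of size $\ell$ with nonzero Taylor coefficient around $y=1$, then restricting $y_i \mapsto 1$ for $i \notin \Supp(\a^*)$ yields a polynomial equal to a nonzero scalar times $\prod_{i \in \Supp(\a^*)}(y_i-1)$, which has $2^\ell$ distinct monomials; each must arise from at least one surviving $y^{m_j}$ (distinct cosets $\b \cap \a^* = \c$ come from disjoint sets of $\b$), forcing $\ge 2^\ell$ distinct $m_j$'s in the support and contradicting $k' < 2^\ell$.
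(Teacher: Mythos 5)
The paper does not actually prove this lemma; it is quoted verbatim from \cite[Lemma~5.2]{GKST15}. So your proposal cannot be compared against a proof in this paper, and must be judged on its own. Your overall strategy is correct and, to the best of my knowledge, is the same in spirit as the one in \cite{GKST15}: expand $\coeff_B(\x^\a)$ in the isolated basis $c_1,\dots,c_{k'}$, use the weight function to normalise the transition matrix by $t^{-\w(m_j)}$, reduce to the nonsingularity of the constant term $N=\bigl(\binom{m_j}{\a_i}\bigr)$, and show that some choice of $k'$ low-support $\a_i$'s makes $N$ nonsingular because the full binomial matrix has column rank $k'$.

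There are, however, two concrete problems. First, you repeatedly appeal to ``the hypothesis on $\cha(\F)$,'' but the lemma has no such hypothesis, and \cite{GKST15} prove it over every field. In small characteristic $\binom{m_j}{\a}$ can be zero even when $m_j \ge \a$, so your parenthetical about the leading term being nonzero is wrong; fortunately the argument does not need it. The identity $\tilde Q(0)_{i,j}=\binom{m_j}{\a_i}$ holds unconditionally (it is $0$ when $m_j\not\ge\a_i$), and what one must prove is precisely that \emph{some} $\ell$-support selection makes $\det N\ne 0$ over $\F$ --- which is the column-rank claim, valid in any characteristic. You should delete the characteristic remark entirely.

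Second, ``the general case follows from the same Hasse-derivative analysis'' hides a real gap. In the multilinear case the restriction $y_i\mapsto 1$ ($i\notin\Supp(\a^*)$) leaves a \emph{scalar multiple} of $\prod_{i\in\Supp(\a^*)}(y_i-1)$, which visibly has $2^\ell$ monomials. In the non-multilinear case the restriction only shows the restricted polynomial is \emph{divisible} by $\prod_{i\in S}(y_i-1)$; it is not equal to a scalar multiple of it. You then need the separate (standard but nontrivial) sparsity fact: over any field, a nonzero polynomial divisible by $\prod_{i=1}^{\ell}(y_i-1)$ has at least $2^{\ell}$ monomials. This is proved by induction on $\ell$, writing $q=\sum_j y_\ell^j q_j$ and observing that at least two of the $q_j$ are nonzero and each is divisible by $\prod_{i<\ell}(y_i-1)$; the monomial sets for distinct $j$ are disjoint, giving $2\cdot 2^{\ell-1}$. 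Once that lemma is in hand, your ``each monomial needs a distinct $m_j$'' counting closes the argument, giving $k'\ge 2^{\ell}$ and the desired contradiction with $k'\le k<2^{\ell}$. With these two fixes the proposal becomes a correct proof.
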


We now recall the construction complexity of a basis isolating weight assignment for
ROABP from~\cite{AGKS15}. Here, we present a slightly modified version of their Lemma~8 (without proof),
which easily follows from it.

\begin{lemma}
\label{lem:ellweightFunction}
For any numbers $\ell$, $n$, $k$ and $d$, we can construct a family $\W$
 of $(k n d)^{O(\log \ell)}$ integer weight assignments on variables $\{x_1,x_2,\dots,x_n\}$
with weights bounded by $(k n d)^{O(\log \ell)}$ which has the following property:
Let $D({\x})$ be an 
individual-degree-$d$   %
polynomial over $\A_k$ 
of the form $D_1(x_{i_1}) D_2(x_{i_2}) \dotsm D_{\ell}(x_{i_{\ell}})$
for some distinct $i_1,i_2,\dots,i_\ell \in [n]$.
Then one of the weight assignments in $\W$ is basis isolating for $D(\x)$.
\end{lemma}

Let $\W$ be the family constructed in \expref{Lemma}{lem:ellweightFunction} 
with $k = w^2$ and $\ell = \ceil{\log(w^2+1)}$.
{}From \expref{Lemma}{lem:ellweightFunction} and \expref{Lemma}{lem:lconc}, 
for any $D(\x)=D_{1} (x_{i_1}) D_{2} (x_{i_2}) \cdots D_{\ell} (x_{i_\ell}) \in \F^{w \times w}[\x]$
there exists a weight assignment $\w \in \W$ such that 
$D(\x+t^\w)$ is $\ell$-concentrated (over $\F(t)$).
However, we want a single tuple $\s$ which works for every $D(\x)$. 
To get a single tuple, we combine the tuples in $\{t^\w\}_{\w \in \W}$ 
using the standard technique of Lagrange interpolation (also used in~\cite{FSS14,GKST15}).
Let $\{\alpha_\w\}_{\w \in \W}$ be distinct constants.
Define 
\[ \s(t_1,t_2) = \sum_{\w \in \W } t_1^\w \prod_{\substack{{\w' \in \W}\\  
{\w' \ne \w}}} \frac{t_2-\alpha_{\w'}}{\alpha_{\w}-\alpha_{\w'}}\,.
\]
Note that $\s(t_1,\alpha_\w) = t_1^{\w}$.
The following claim shows that 
if $D(\x+t_1^\w)$ is $\ell$-concentrated for some $\w \in \W$, 
then $D(\x+\s(t_1,t_2))$ is also $\ell$-concentrated.
\begin{claim}
For a polynomial $D(\x)$ over an algebra and a constant $\alpha_\w$, 
if $D'(\x) = D(\x+\s(t_1,\alpha_\w))$ has $\ell$-concentration (over $\F(t_1)$)
then so does $D''(\x) = D(\x+\s(t_1,t_2))$ (over $\F(t_1,t_2)$). 
\label{cla:interpolation}
\end{claim}
\begin{proof}
It is easy to see that for any tuple $\s$, 
coefficients of $D(\x+\s)$ are linear combinations 
of coefficients of $D$ and vice versa (over an appropriate field). 
And since shifting is an invertible, it preserves the rank of all coefficients. 
That is,
$$\rank_\F\{\coeff_D(\x^\a)\}_{ \xa \in \M} 
= \rank_{\F(t_1)}\{\coeff_{D'}(\x^\a)\}_{ \xa \in \M}
=\rank_{\F(t_1,t_2)}\{\coeff_{D''}(\x^\a)\}_{ \xa \in \M}\,.
$$
Let this rank be $k$. Let us represent each coefficient of $D$ as a vector in $\F^k$.
Then coefficients of $D'$ and $D''$ come from $\F[t_1]^k$ and $\F[t_1,t_2]^k$, respectively.
Let $\M_{\ell} = \{ \xa \in \M \mid \supp(\a) < \ell \}$.
Since $D'$ has $\ell$-concentration,
$$\rank_{\F(t_1)}\left\{
\coeff_{D'}(\x^\a) \mid \xa \in \M_{\ell}
\right\} = k\,.$$
Hence, one can form an full rank matrix $L(t_1) \in \F[t_1]^{k \times k}$
which is given by
$$L(t_1) = \begin{pmatrix}
\coeff_{D'}(\x^{\a_1}) & \coeff_{D'}(\x^{\a_2}) &\dots & \coeff_{D'}(\x^{\a_k})
\end{pmatrix}
$$ 
for some $\x^{\a_1},\x^{\a_2},\dots,\x^{\a_k} \in \M_{\ell}$.
Define $L'(t_1,t_2) \in \F[t_1,t_2]^{k \times k}$ to be the matrix 
$$L'(t_1,t_2) = \begin{pmatrix}
\coeff_{D''}(\x^{\a_1}) &\coeff_{D''}(\x^{\a_2}) &\dots& \coeff_{D''}(\x^{\a_k})
\end{pmatrix}\,.
$$
{}From the definition of $D'$ and $D''$, 
it is clear that $L'(t_1,\alpha_\w) = L(t_1)$. 
Since $\det(L) \neq 0$, we get that $\det(L') \neq 0$.
Thus, 
$$\rank_{\F(t_1,t_2)}\left\{
\coeff_{D''}(\x^\a) \mid \xa \in \M_{\ell}
\right\} \geq k\,.$$
However, $k$ is the rank of all coefficients of $D''$.
Hence, $D''$ has $\ell$-concentration.
\end{proof}

Now, since $\s(t_1,t_2)$ has the desired property from \expref{Lemma}{lem:ellton},
$f(\x+\s(t_1,t_2))$ is $\ell$-concentrated for any polynomial $f(\x)$ computed by 
a width-$w$ ROABP\@.
Recall that $\deg_{t_1}(\s)$ is bounded by $( n d w)^{O(\log \log w)}$ from 
the construction in \expref{Lemma}{lem:ellweightFunction}. 
The same bound also holds on $\deg_{t_2}(\s)$ because $\abs{\W} = (n d w)^{O(\log \log w)}$.

\begin{lemma}
Given $n,d,w$, one can compute a tuple $\s(t_1,t_2) \in \F[t_1,t_2]^n$
of
degree $(ndw)^{O(\log \log w)}$ such that
for any $n$-variate, 
individual-degree-$d$
 polynomial $f(\x) \in \F[\x]$ computed by 
width-$w$ ROABPs in any order, %
$f(\x + \s(t_1,t_2))$ is $O(\log w)$-concentrated. 
\label{lem:ROABPconc}
\end{lemma}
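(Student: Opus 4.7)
The plan is to combine the sub-circuit reduction of Lemma~\ref{lem:ellton} with the single universal shift from Lemma~\ref{lem:singleshift} in a straightforward way. The underlying matrix algebra has dimension at most $w^2$, so I would set $\ell := \lceil \log(w^2+1) \rceil = O(\log w)$, which is the target concentration level.

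First I would invoke Lemma~\ref{lem:singleshift} with parameters $n, d, w, \ell$ to construct a single $n$-tuple $\f(t) \in \F[t]^n$ in time $(ndw)^{O(\log \ell)} = (ndw)^{O(\log \log w)}$, with the property that for \emph{every} $\ell$-variate width-$w$ ROABP $A(\x)$ of individual degree $d$, the shifted polynomial $A(\x + \f(t))$ is $\ell$-concentrated over $\F(t)$. The key point is that $\f$ is a single $n$-tuple: its projection to any $\ell$-size coordinate subset supplies a valid shift for that subset, so $\f$ works uniformly over the choice of $\ell$ variables out of $n$.

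Next, let $D(\x) = D_1(x_1)\cdots D_n(x_n)$ be any commutative width-$w$ ROABP of individual degree $d$, and consider the shifted polynomial $D(\x+\f(t)) = D_1(x_1+f_1(t))\cdots D_n(x_n+f_n(t))$, which is still a product of univariate factors over the same commutative subalgebra. For every $S \subseteq [n]$ with $|S| = \ell$, the sub-product $\prod_{i \in S} D_i(x_i + f_i(t))$ is itself an $\ell$-variate width-$w$ ROABP (over the same commutative subalgebra of $\F^{w\times w}$), and therefore by the universal shift property it is $\ell$-concentrated.

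Finally, I would apply Lemma~\ref{lem:ellton} to $D(\x+\f(t))$: its hypothesis—$\ell$-concentration in every $\ell$-size sub-product—is exactly what we just established, and the commutativity needed in that lemma is given by the assumption on $D$. The conclusion is that $D(\x+\f(t))$ itself is $\ell$-concentrated, i.e., $O(\log w)$-concentrated, as required. The only delicate point worth double-checking is that the concentration in Lemma~\ref{lem:singleshift} and in Lemma~\ref{lem:ellton} are taken over the same scalar field $\F(t)$ (rather than, say, $\F$ versus $\F(y,t)$), so no additional interpolation step is needed beyond the one already folded into Lemma~\ref{lem:singleshift}; everything else is a matter of matching parameters and totaling the cost as $(ndw)^{O(\log \log w)}$.
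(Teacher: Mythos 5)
Your proposal is correct and is exactly the approach the paper takes: the paper itself gives no spelled-out proof beyond the single sentence ``Combining Lemma~\ref{lem:ellton} and Lemma~\ref{lem:singleshift} we get the following,'' and you have filled in that combination correctly, including the key observations that $\f(t)$ works uniformly over every $\ell$-size subset of variables, that shifting preserves the product-of-univariates structure and commutativity needed by Lemma~\ref{lem:ellton}, and that both lemmas speak of concentration over $\F(t)$.
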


As mentioned before, $O(\log w)$-concentration in $f(\x+\s)$ means that it has 
an $O(\log w)$-support monomial with a nonzero coefficient.
\expref{Lemma}{lem:ROABPconc} gives a bivariate tuple $\s(t_1,t_2)$ for the shift. 
We argue that one can substitute field values for $t_1$ and $t_2$ 
such that any chosen nonzero coefficient
in $f(\x+\s)$ remains nonzero after the substitution.
Note that any coefficient of $f(\x+\s)$ is a polynomial in $t_1$ and $t_2$ 
with its degree being at most $\deg(f) \cdot \deg(\s)$,
which is $(ndw)^{O(\log \log w)}$.
Thus, by Schwartz-Zippel-DeMillo-Lipton Lemma,
 substituting $(ndw)^{O(\log \log w)}$ many field values for $t_1$ and $t_2$ suffices. 

Now, we move on to the second step of Forbes, Shpilka and Saptharishi~\cite{FSS14}.
They give an $(ndw)^{O(\log \log w)}$-size hitting set for an already $O(\log w)$-concentrated 
polynomial which is computed by ROABPs in any order. %
They do this by reducing the PIT question to an $O(\log w)$-variate ROABP~\cite[Lemma 7.6]{FSS14}.

\begin{lemma}[\cite{FSS14}]
Let $f(\x) \in \F[\x]$ be an $n$-variate, 
individual-degree-$d$ 
polynomial 
computed by 
width-$w$ ROABPs in any order. %
Suppose $f(\x)$ has an $(\leq \ell)$-support monomial with a nonzero coefficient. 
Then, there is a $\poly(n,w,d)$-time computable $m$-variate 
map $\phi \colon \x \to \F[y_1,y_2, \dots, y_m]$ such that
$f(\phi(\x))$ is a nonzero polynomial 
of
degree $< d^2n^4$, where
$m = O(\ell^2)$.
Moreover, $f(\phi(\x))$ is computed by 
width-$w$, $m$-variate ROABPs in any order. %
\label{lem:fss}
\end{lemma}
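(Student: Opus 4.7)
The plan is to build~$\phi$ in the restricted shape $\phi(x_i) = p_i(y_{j(i)})$, so each $x_i$ is sent to a univariate polynomial in a single one of the $m$ new variables. Under any such map the commutative-ROABP structure is preserved for free: by commutativity,
\[
C(\phi(\x)) \;=\; U^{\transpose} \prod_{i=1}^n D_i\!\bigl(p_i(y_{j(i)})\bigr)\,T \;=\; U^{\transpose} \prod_{k=1}^m \Bigl(\prod_{i : j(i)=k} D_i(p_i(y_k))\Bigr) T,
\]
which is visibly an $m$-variate commutative ROABP of width~$w$ over the same commutative subalgebra. Keeping $\deg(p_i)\le nd$ keeps the total degree within the promised $d^2 n^4$ bound. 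So the entire problem reduces to choosing $(j,p_1,\dots,p_n)$ with $m=O(\ell^2)$ so that $C(\phi(\x))\ne 0$ whenever $C$ has some $(\le\ell)$-support monomial of nonzero coefficient.

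Step one is to fix an $(n,\ell)$-perfect hash family $\mathcal{H}=\{h_1,\dots,h_N\}$ of functions $h_r:[n]\to[\ell{+}1]$, explicitly constructible in $\poly(n,\ell)$ time, so that for every $S\subseteq[n]$ with $|S|\le\ell$ some $h_r$ is injective on~$S$. Allocate $m = N(\ell{+}1) = O(\ell^2)$ fresh variables $y_{r,k}$, laid out in $N$ disjoint blocks of size~$\ell{+}1$, one per hash. For the particular $r^*$ that is injective on the support $S=\Supp(\a)$ of the hypothetical nonzero monomial~$\xa$, the naive map $x_i\mapsto y_{r^*,h_{r^*}(i)}$ carries $\xa$ injectively into the $r^*$-th block; a Kronecker-style twist $p_i(y)=y^{q_i}$ with $q_i$'s chosen so that distinct source monomials of bounded individual degree get distinct image exponents per variable rules out cancellation from other monomials.

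The second step is to combine the $N$ candidate hash-maps into a single deterministic~$\phi$. The key move, modelled on Lemma~\ref{lem:interpolation}, is to keep the $N$ variable blocks disjoint and to interpolate over an auxiliary parameter that selects the active block, then collapse that parameter to a power of~$t$ via a determinant-preservation argument exactly as in Lemma~\ref{lem:singleshift}. Because the blocks are disjoint the combined map still has each $\phi(x_i)$ depending on only one new variable (namely $y_{r(i),h_{r(i)}(i)}$ for a specific block assignment $r(i)$ baked into~$\phi$), so the commutative-ROABP structure is retained; universality across all possible supports~$S$ is purchased through the $N$-block redundancy, which is exactly what pushes $m$ up to~$O(\ell^2)$ rather than just~$O(\ell)$.

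The main obstacle is reconciling three conflicting constraints: (i)~a single fixed assignment $j:[n]\to[m]$ cannot be injective on every $\ell$-subset unless $m$ is exponentially large, yet the map must work universally; (ii)~the map must factor as $\phi(x_i)=p_i(y_{j(i)})$, depending on a single new variable per $x_i$, to preserve the commutative-ROABP form and width; and (iii)~cancellation between images of distinct source monomials must be prevented without inflating the degree past $d^2n^4$. Satisfying (i) and (ii) simultaneously is the heart of the argument, and it is what forces the use of multiple disjoint blocks together with a delicate deterministic choice of~$j$ and the univariate twists~$p_i$; once those are pinned down, (iii) follows from routine Kronecker-degree bookkeeping.
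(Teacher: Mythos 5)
You have the right starting observation: to preserve width and the commutative-ROABP form, the map must send each $x_i$ to a univariate in a \emph{single} new variable, so $\phi(x_i)=p_i(y_{j(i)})$, and then $C(\phi(\x))=U^\transpose\prod_k\bigl(\prod_{i:j(i)=k}D_i(p_i(y_k))\bigr)T$ is again a width-$w$ commutative ROABP. That part is sound. The difficulty is exactly the tension you name at the end, but the proposed resolution does not actually resolve it, for two concrete reasons.

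First, the parameter count is off. Any $(n,\ell)$-perfect hash family $\{h_r:[n]\to[\ell+1]\}$ in which every $\ell$-subset of $[n]$ is shattered by some $h_r$ must have $N=\Omega(e^{\ell}\log n)$ members; in particular $N$ grows with $n$, so the claim $m=N(\ell+1)=O(\ell^2)$ is simply false once $n$ is large. (Replacing the range $[\ell+1]$ by $[\Theta(\ell^2)]$ to make a single hash succeed with constant probability does not help: to cover all $\binom{n}{\ell}$ subsets deterministically you still need $N=\Omega(\log n)$.) So the number of new variables you allocate is not $O(\ell^2)$; it depends on $n$, which is precisely what the lemma forbids.

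Second, the block-combining step is internally inconsistent. You describe the final map as having ``a specific block assignment $r(i)$ baked into $\phi$,'' so that $\phi(x_i)=p_i(y_{r(i),h_{r(i)}(i)})$. But then the map is equivalent to a single composite hash $i\mapsto(r(i),h_{r(i)}(i))$ into $O(\ell^2)$ cells, and a single such hash cannot be collision-free on every $\ell$-subset of $[n]$ for $n$ large; the $N$-fold redundancy bought you nothing because it was discarded the moment a fixed $r(i)$ was chosen. Conversely, if you genuinely interpolate across the blocks (take a linear combination of the $N$ candidate maps controlled by an auxiliary selector, then specialise the selector), each $\phi(x_i)$ becomes a polynomial in $N$ different new variables, and $\prod_i D_i(\phi(x_i))$ no longer factors as $\prod_k E_k(y_k)$, so the width-$w$ commutative-ROABP conclusion is lost. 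Lemma~\ref{lem:interpolation} works because the property being transported is a single scalar non-vanishing (a determinant); here the property you need to preserve is the \emph{structural} fact that each new variable is read in exactly one layer, and Lagrange interpolation does not preserve structural facts about the map. Finally, ``routine Kronecker bookkeeping'' also glosses over the cancellation coming from monomials of $C$ with support larger than $\ell$; $\ell$-concentration of $C$ does not make those coefficients vanish, and they share image monomials with the low-support witness under any many-to-few variable identification.

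So the proposal has the correct structural constraint but does not give a construction meeting it: the variable count is wrong and the combining step either breaks universality or breaks the ROABP structure.
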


{}From the results of~\cite{FS13,AGKS15}, we know that an $m$-variate, width-$w$ 
ROABP has an $(mdw)^{O(\log m)}$-size
hitting set. 
Combining \expref{Lemma}{lem:ROABPconc} and \expref{Lemma}{lem:fss} with this fact
and putting $m = O(\log^2 w)$,
we get the following.

\begin{theorem}
\label{thm:commROABPhs}
For the class of $n$-variate, 
individual-degree-$\degree$ 
polynomials computed by 
width-$w$ 
ROABPs in any order,
one can construct a hitting set of size $(n \degree w)^{O(\log \log w)}$.
\end{theorem}

\paragraph*{Concentration in Set-multilinear Circuits.}
Similar to \expref{Theorem}{thm:commROABPhs}, it would be interesting
to achieve the same size hitting set for set-multilinear circuits. 
Recall from \expref{Section}{sec:setmulti} that a polynomial computed by a
depth-$3$ set-multilinear circuit can be written as $(1,1, \dots, 1) \cdot D$,
where $D = D_1(\x_1) D_2(\x_2) \cdots D_q(\x_q)$ is a product of linear polynomials
over a commutative algebra of dimension $k$. 
Here the partition $\x = \x_1 \cup \x_2 \cup \cdots \cup \x_q$ is unknown.
Note that the polynomial $D$ can also be expressed as 
\[
  D = D_{\sigma(1)}(\x_{\sigma(1)}) D_{\sigma(2)}(\x_{\sigma(2)}) \cdots D_{\sigma(q)}(\x_{\sigma(q)})
\]
for any permutation $\sigma$ on $[q]$.
Hence, one can follow the same arguments as for 
polynomials computed by ROABPs in any order %
to get concentration in set-multilinear circuits.
Hence, we get the following result analogous to \expref{Lemma}{lem:ROABPconc}.

\begin{corollary}
Given $n,k$, one can compute an $n$-tuple $\s(t_1,t_2)$
of
degree $(nk)^{O(\log \log k)}$ such that
for any $n$-variate polynomial $f(\x)$ computed by a depth-$3$
set-multilinear circuit with top fan-in $k$,
the polynomial %
$f(\x + \s(t_1,t_2))$ is $O(\log k)$-concentrated. 
\end{corollary}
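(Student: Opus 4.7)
The strategy mirrors the proof of Lemma~\ref{lem:ROABPconc}, applied to the algebra representation of set-multilinear circuits introduced in Section~\ref{sec:setmulti}. First, write $C(\x) = (1,1,\ldots,1) \cdot D(\x)$, where $D(\x) = D_1(\x_1) D_2(\x_2) \cdots D_q(\x_q)$ is a product of linear polynomials in disjoint variable blocks $\x_j$ over the $k$-dimensional commutative coordinate-wise algebra $\A_k$. Since inner product with the all-ones vector is a linear operation, any $\ell$-concentration in $D$ transfers to $\ell$-concentration in $C$. Moreover, each $D_j$ is linear in $\x_j$ and stays linear under any shift $\x_j \mapsto \x_j + \f_j(t)$; hence in the shifted polynomial each monomial uses at most one variable per block, so variable-support equals block-support. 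It therefore suffices to target $\ell = \lceil \log(k+1) \rceil$-concentration in the block sense for $D(\x + \f(t))$.

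The first step is to adapt Lemma~\ref{lem:ellton} to the block setting. Because $\A_k$ is commutative and the blocks $\x_j$ are pairwise disjoint, the inductive proof of that lemma carries over with each single-variable factor $D_i(x_i)$ replaced by a block factor $D_j(\x_j)$: if for every $S \subseteq [q]$ with $\abs{S} = \ell$ the sub-product $\prod_{j \in S} D_j(\x_j)$ is $\ell$-concentrated, then so is $D$. The identity $\coeff_{D_S}(\xc) = \coeff_{D_T}(\xc) \cdot \prod_{j \in S \setminus T} \coeff_{D_j}(1)$ still holds because distinct $D_j$'s act on disjoint variable sets and commute.

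The second step is to achieve $\ell$-concentration in every $\ell$-block sub-product $\prod_{j \in S} D_j(\x_j)$. Each such sub-product has the structure of a commutative ROABP-like product of $\ell$ factors over $\A_k$, every factor being linear in its block. The construction of Lemma~\ref{lem:ellweightFunction} with individual degree $d = 1$ produces a family $\mathcal{B}$ of $(nk)^{O(\log \ell)}$ candidate weight assignments such that at least one $\w \in \mathcal{B}$ is basis-isolating for any fixed sub-product; Lemma~\ref{lem:lconc} then converts basis isolation into $\ell$-concentration via the shift $t^{\w}$. Finally, the Lagrange-interpolation amalgamation of Lemmas~\ref{lem:interpolation} and~\ref{lem:singleshift} combines $\mathcal{B}$ into a single $n$-tuple $\f(t) \in \F[t]^n$ that simultaneously works for every $\ell$-sub-product. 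Composing with Step~1 yields $\ell$-concentration of $D$, and hence of $C$, at total cost $(nk)^{O(\log \log k)}$ since $\log \ell = O(\log \log k)$.

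The main obstacle is verifying that the basis-isolating weight-assignment construction of~\cite{AGKS15}, stated for products of univariate factors, still applies when each factor is a multivariate linear polynomial over a block $\x_j$. The construction in~\cite{AGKS15} is by induction on the number of factors, so the only modification is in the base case, which now concerns a single linear polynomial in $\abs{\x_j}$ variables over $\A_k$ rather than a univariate polynomial. With $d = 1$ the combinatorial counting goes through unchanged, and both the size of $\mathcal{B}$ and the degree bound on the interpolated shift remain $(nk)^{O(\log \ell)}$; the rest of the pipeline is independent of the inner structure of each factor, giving the stated bound.
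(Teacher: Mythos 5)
Your proposal is correct and matches the paper's intent: the paper's proof is only the one-line remark that ``the same arguments as for commutative ROABP work here,'' and you have fleshed out exactly that chain (reduction to the coordinate-wise algebra via $C=(1,\dots,1)\cdot D$, block version of Lemma~\ref{lem:ellton}, basis-isolating weight assignment with $d=1$, and the Lagrange-interpolation amalgamation). You also correctly pinpoint the only nontrivial adaptation, namely that the AGKS15 construction must be read for block factors rather than univariate ones, and the observation that linearity of each $D_j$ makes variable-support coincide with block-support so the resulting $\ell$-concentration is in the usual sense.
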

However, it is not clear whether the second step of the hitting-set construction
 can be done for set-multilinear circuits,
\ie, finding a better hitting set by assuming that the polynomial is already concentrated
(\expref{Lemma}{lem:fss}).

\section{Discussion}
For our first result (\expref{Theorem}{thm:knownROABPhs}), there are three directions for improvement.
Ideally, one would like to have all three at once.
\begin{enumerate}
\item Find a similar hitting set for the unknown-order case. 
In fact, we conjecture that the same hitting set (\expref{Lemma}{lem:recursivehs}) works
for the unknown-order case as well.
\item Get a hitting set for all fields (including low-characteristic fields). 
It is easy to construct examples over small characteristic fields where our hitting set does not work.
\item Reduce the hitting-set size to polynomial. 
To achieve this, it seems one has to do away with the divide and conquer approach.
\end{enumerate}
The map described in
\expref{Conjecture}{con:polyROABP} is a possible candidate for a 
polynomial-size 
hitting set for ROABPs
 and proving this conjecture would resolve two of the points above.

As mentioned earlier, we believe the ideas here may help in finding a better PRG for ROBPs. 
Studying such connections would in
particular take us closer towards resolving a major open question of
finding an $O(\log n)$-seed-length PRG for constant width ROBPs.
\section{Acknowledgements}
We thank the anonymous reviewer for suggesting that our techniques 
in \expref{Section}{sec:any-order-ROABP} %
might work 
for a more general
class of polynomials, namely polynomials computed by ROABPs in any order. %
 We are thankful to Herv\'{e} Fournier, Sumanta Ghosh, Ramprasad Saptharishi
for helpful discussions on the same.

\bibliographystyle{tocplain}   %
\bibliography{v013a002}

\begin{tocauthors}
\begin{tocinfo}[gurjar]
 Rohit Gurjar\\
 Postdoctoral fellow\\
 Tel Aviv University\\
 Tel Aviv, Israel\\  %
 rohitgurjar0\tocat{}gmail\tocdot{}com \\
 \url{http://www.cse.iitk.ac.in/users/rgurjar}
\end{tocinfo}
\begin{tocinfo}[korwar]
 Arpita Korwar\\
 \phd\ student\\  %
 Indian Institute of Technology Kanpur\\
 Kanpur, India\\ 
 arpk\tocat{}cse\tocdot{}iitk\tocdot{}ac\tocdot{}in\\
 \url{http://www.cse.iitk.ac.in/users/arpk}
\end{tocinfo}

\begin{tocinfo}[saxena]
  Nitin Saxena\\
  Associate professor\\
  Indian Institute of Technology Kanpur\\
  Kanpur, India\\ %
 nitin\tocat{}cse\tocdot{}iitk\tocdot{}ac\tocdot{}in\\
 \url{http://www.cse.iitk.ac.in/users/nitin}
\end{tocinfo}
\end{tocauthors}

\begin{tocaboutauthors}
\begin{tocabout}[gurjar]
\textsc{Rohit Gurjar} is currently a postdoctoral fellow at \href{https://english.tau.ac.il/}{Tel Aviv University}, working with \href{http://www.cs.tau.ac.il/~shpilka/}{Amir Shpilka}. 
His previous postdoc was at \href{http://www.htw-aalen.de/}{Aalen University} (2015-16) with \href{http://image.informatik.htw-aalen.de/Thierauf/}{Thomas Thierauf}. 
Before that he was at \href{http://www.iitk.ac.in}{Indian Institute of Technology Kanpur} for 10 long years for his B.\,Tech.-M.\,Tech.\ (2005-10) and
\phd\ (2010-15). 
He was very fortunate
to have \href{https://sites.google.com/view/manindra/home}{Manindra Agrawal} and \href{http://www.cse.iitk.ac.in/users/nitin}{Nitin Saxena} as his 
\phd\ supervisors. His \phd\ thesis was chosen for the ACM India Doctoral Dissertation Award, 2017.
He is in general interested in theoretical computer science, and in particular in computational complexity and derandomization.
Some problems on which he has worked
are polynomial identity testing, perfect matching, and matrix completion.
He likes hiking, cycling and listening to music.
\end{tocabout}

\begin{tocabout}[korwar]
\textsc{Arpita Korwar} is a \phd\ student at
\href{http://www.iitk.ac.in}{IIT Kanpur} advised by
\href{https://sites.google.com/view/manindra/home}{Manindra Agrawal} and
\href{http://www.cse.iitk.ac.in/users/nitin}{Nitin Saxena}.
Her work has also been influenced by
\href{https://www.cse.iitk.ac.in/users/sb/}{Somenath Biswas}.
Her interests are in the theoretical aspects of computer science,
especially in using algebraic techniques for computational complexity.
Currently she is working on polynomial identity testing and
arithmetic lower bounds with
\href{https://webusers.imj-prg.fr/~herve.fournier/}{Herv\'{e} Fournier} and
\href{https://webusers.imj-prg.fr/guillaume.malod}{Guillaume Malod} at
the %
\href{https://universite.univ-paris-diderot.fr/}{University of Paris
Diderot}.
She visited \href{http://image.informatik.htw-aalen.de/Thierauf/}{Thomas
Thierauf} at 
the %
\href{http://www.uni-ulm.de/en/}{University of Ulm}
a couple of times during her \phd\ studies.
She likes the outdoors and sports!
\end{tocabout}

\begin{tocabout}[saxena]
\textsc{Nitin Saxena} received his \phd\ from the \href{http://www.iitk.ac.in}{Indian Institute of Technology Kanpur} in 2006. He was truly fortunate to have \href{https://sites.google.com/view/manindra/home}{Manindra Agrawal} for his supervisor. He also spent stints 
at  %
\href{https://www.princeton.edu/main/}{Princeton University} (2003-04) and 
at the %
\href{http://www.nus.edu.sg/}{National University of Singapore} (2004-05). 
He was a postdoc 
at the   %
\href{https://www.cwi.nl/}{Centrum voor Wiskunde en Informatica Amsterdam} (2006-08) and a faculty 
at the  %
\href{http://www.hausdorff-center.uni-bonn.de/}{Hausdorff Center for Mathematics Bonn} (2008-13). Nitin's long-term interests are in 
algebra-flavored   %
computational complexity problems. 
He has contributed to primality testing, polynomial identity testing, polynomial independence, polynomial factoring and polynomial equivalence problems. Some of these works have been awarded 
the %
G\"odel prize, Fulkerson prize, CCC best paper (2006), 
and ICALP best paper (2011) %
awards.
He enjoys interacting with enthusiastic young 
researchers. 
In his spare (\& non-spare) time he enjoys listening to music, watching movies, reading non-fiction, swimming and traveling. 

\end{tocabout}
\end{tocaboutauthors}

\end{document}